\documentclass[12pt,leqno]{article}

\input{style}  

\usepackage[colorlinks]{hyperref}
\usepackage[colorinlistoftodos, textwidth=4cm, shadow]{todonotes}
\usepackage{amsmath}
\usepackage[displaymath, tightpage]{preview}

% \todo[disable]{Plain todonotes.}%
% \todo[color=blue!40]{Todonote with a different color.}%
% \todo[nolist]{Todonote that is only shown in the margin and not in
% the list of todos.}%
% \todo[size=\small, color=green!40]{A note with a small fontsize.}%
% \todo[inline]{A very long todonote that certainly will fill more
% than a single line in the list of todos. Just to make sure let's add
% some more text \ldots}
% \todo[noline]{A note with no line back to the text.}%
% \todo[inline, color=red!50]{Inline todonotes.}%
% \todo[caption={A short entry in the list of todos}]{A very long
% todonote that certainly will fill more than a single line in the
% list of todos \ldots}
% \missingfigure{A figure I have to make \ldots}
% \missingfigure[figwidth=\textwidth]{A figure I have to make \ldots}

%%%%%%%%%%%%%%%%%%%%%%%%%%%%%%%%%%%%%%%%%%%%%%%%%%%%%%%%%%%%%%%%%%
\begin{document}
%%%%%%%%%%%%%%%%%%%%%%%%%%%%%%%%%%%%%%%%%%%%%%%%%%%%%%%%%%%%%%%%%%

\title{Optimal Order Scheduling for Deterministic Liquidity Patterns}
 
\author{Peter Bank and Antje Fruth\\ Technische
  Universit{\"a}t Berlin\\
  Institut f{\"u}r Mathematik\\ Stra{\ss}e des 17. Juni 136, 10623
  Berlin, Germany \\ (bank@math.tu-berlin.de) % \and 
  % Antje Fruth
  % \\ \todo[inline]{Add contact Details}
}
\date{\today}

\maketitle

\begin{abstract}
  We consider a broker who has to place a large order which consumes a
  sizable part of average daily trading volume. The broker's aim is
  thus to minimize execution costs he incurs from the adverse impact
  of his trades on market prices. By contrast to the previous
  literature, see, e.g., \citet{ObizhaevaWang},
  \citet{PredoiuShaikhetShreve}, we allow the liquidity parameters of
  market depth and resilience to vary deterministically over the
  course of the trading period. The resulting singular optimal control
  problem is shown to be tractable by methods from convex analysis
  and, under minimal assumptions, we construct an explicit solution to
  the scheduling problem in terms of some concave envelope of the
  resilience adjusted market depth.
\end{abstract}
 
\begin{description}
\item[Keywords:] Order scheduling, liquidity, convexification,
  singular control, convex analysis, envelopes, optimal order execution
%\item[JEL Classification:] \todo{Add}
%\item[AMS Subject Classification (1991):] \todo{Add}
\end{description}

%\newpage

%\listoftodos

%\newpage

\section{Introduction}
\label{sec:introduction}

It is well-known that market liquidity exhibits deterministic intraday
patterns; see, e.g., \citet{ChordiaEtal} or \citet{KempfMayston} for
some empirical investigations. The academic literature on optimal
order scheduling, however, mostly considers time-invariant
specifications of market depth and resilience; cf.
\citet{ObizhaevaWang}, \citet{AlfonsiFruthSchied},
\citet{PredoiuShaikhetShreve}. It thus becomes an issue how to account
for time-varying specifications of these liquidity parameters when
minimizing the execution costs of a trading schedule.

Using dynamic programming techniques and calculus of variations, this
problem was addressed by \citet{FruthUrusovSchoeneborn}. These authors
show that under certain additional assumptions on these patterns there
is a time-dependent level for the ratio of the number of orders still
to be scheduled and the current market impact which signals when
additional orders should be placed. Explicit solutions are provided
for some special cases where the broker is continually issuing
orders. The thesis \cite{Fruth} discusses conditions under which the
order signal structure persists in case of stochastically varying
liquidity parameters. \citet{AcevedoAlfonsi} use backward induction
arguments in discrete time and then pass to continuous time to compute
optimal policies for nonlinear specifications of market impacts which
are scaled by a time-dependent factor satisfying some strong
regularity conditions. In their approach order schedules are allowed
in principle to sell and buy along the way, regardless of the sign of
the desired terminal position, and they proceed to identify conditions
(deemed to ensure absence of market manipulation strategies) under
which optimal schedules will not do so. Optimal schedules are then
obtained only under a strong assumption linking resilience and market
depth to each other along with their time derivatives.

By contrast to these approaches, we focus from the outset on pure
buying or selling schedules and show how to reduce our optimization
problem to a convex one. Hence, we do not have to impose conditions
ensuring that orders are scheduled in certain ways at certain
times. Instead, optimal order sizes and times are derived endogenously
from the structure of market depth and resilience alone. This is made
possible by the use of convex analytic first-order characterizations
of optimality which we show are intimately related to the construction
of generalized concave envelopes of a resilience-adjusted form of
market depth.  Under minimal assumptions, this allows us to
characterize when optimal schedules exist and, if so, to construct
them explicitly in terms of these envelopes. We illustrate our
findings by recovering the analytic solution of \citet{ObizhaevaWang}
and we show how optimal schedules depend on fluctuations in market
depth and the level of resilience. It turns out that with time-varying
market depth optimal order schedules do not have to consist of big
initial and terminal trades with infinitesimal ones in between as
typically found in the previous literature. We also find that lower
resilience will let optimal schedules focus more on (local) maxima of
market depth to the extent that with no resilience optimal schedules
trade only when market depth is at its global maximum.

\section{Setup}
\label{sec:setup}

We consider a broker who has to place an order of a total number of
$x>0$ shares of some stock. The broker knows that, due to limited
liquidity of the stock, these orders will be executed at a mark-up
over some reference stock price. This mark-up will depend on the
broker's past and present trades. For our specification of the mark-up
we adopt the model proposed by \citet{ObizhaevaWang}, see also
\citet{AlfonsiFruthSchied} and \citet{PredoiuShaikhetShreve} for
further motivation of this approach. By contrast to these papers, but
in line with \citet{FruthUrusovSchoeneborn} and
\citet{AcevedoAlfonsi}, we will allow for the market's liquidity
characteristics of depth and resilience to be changing over time
according to a deterministic pattern.

Specifically, given the broker's cumulative purchases $X=(X_t)_{t \geq
  0}$, a right-continuous increasing process with $X_{0-} \set 0$, the
resulting mark-up evolves according to the dynamics
\begin{equation}
  \label{eq:1}
 \eta^X_{0-} \set \eta_0 \geq 0, \quad d\eta^X_t = \frac{dX_t}{\delta_t}-r_t \eta^X_t \,dt 
\end{equation}
where $\delta_t$ describes the market's depth at time $t \geq 0$ and
where $r_t$ measures its current resilience. Thus, in our model market
impact is taken to be a linear function of order size, the slope at
any one time being determined by the market depth. Moreover, market
impact decays over time at the rate specified by the market's
resilience.

Clearly, \eqref{eq:1} has the right-continuous solution
\begin{equation}
  \label{eq:2}
  \eta^X_t \set \left(\eta_0+\int_{[0,t]}  \frac{\rho_s}{\delta_s} \,dX_s\right)/\rho_t \mtext{with} \rho_t \set
    \exp\left(\int_0^t r_s \,ds\right), \quad t \geq 0, 
\end{equation}
under
\begin{Assumption}
  \label{as:1}
  The resilience pattern is given by a strictly positive and locally
  Lebesgue-integrable function $r:[0,\infty) \to (0,\infty)$.
\end{Assumption}

In the sequel we shall require furthermore

\begin{Assumption}
	\label{as:2}
	The pattern of market depth $\delta:[0,\infty) \to [0,\infty)$
        is nonnegative, not identically zero, bounded and
        upper-semicontinuous with $\limsup_{t \uparrow \infty}
        \delta_t/\rho_t=0$.
\end{Assumption}

The broker's aim is to minimize the cumulative mark-up costs:
\begin{equation}
  \label{eq:3}
 \text{Minimize} \;  C(X) \set \int_{[0,\infty)} \left(\eta^X_{t-} + \frac{\Delta_t
      X}{2\delta_t}\right) \,dX_t \mtext{subject to} X \in \mathcal{X}
\end{equation}
where $\Delta_t X \set X_{t+}-X_{t-}$ and
\begin{displaymath}
  \mathcal{X} \set \descr{(X_t)_{t \geq 0} \mtext{right-cont., incr.}}{X_{0-}=0, X_\infty=x, C(X)<\infty}
\end{displaymath}
with the notation $X_\infty \set \lim_{t \uparrow \infty} X$.

\begin{Remark}
\begin{enumerate}
\item Note that the $\frac{\Delta_t X}{2\delta_t}$-term
  in~\eqref{eq:3} accounts for the costs a non-infinitesimal order
  will incur due to its own mark-up effect; cf., e.g.,
  \citet{AlfonsiFruthSchied} or \citet{PredoiuShaikhetShreve} who in
  addition show how costs functionals as in~\eqref{eq:3} emerge with
  stochastic reference prices evolving as martingales when the broker
  is risk-neutral. Note also that, since we let $X_{0-} \set 0$, a
  value of $X_0>0$ corresponds to an initial jump of size $\Delta_0 X
  = X_0$ in the order schedule.

\item To impose liquidation over a finite time horizon $T \geq 0$, one
  merely has to let the market depth $\delta_t =0$ for $t>T$. Indeed,
  following the convention that $1/0=\infty$ in the
  integration~\eqref{eq:2}, $\eta^X$ and thus the costs $C(X)$ will
  then be infinite for any order schedule $X$ which increases after
  $T$.

\item Strict positivity of $r$ comes without loss of generality since
  if resilience $r = 0$ vanishes almost everywhere on an interval
  $[t_0,t_1]$ there is no need to trade it off against market depth
  there and it is optimal to trade whatever amount is to be traded at
  the moment(s) when market depth $\delta$ attains its maximum over
  this period; cf. Proposition~\ref{pro:3}. Hence, $\delta$ could be
  assumed to take this maximum value at $t_0$ and the interval
  $(t_0,t_1]$ then be removed from consideration.

\item The assumption of upper-semicontinuous market depth $\delta$ is
  necessary to rule out obvious counterexamples for existence of
  optimal schedules. For unbounded upper-semicontinuous $\delta$ one
  can easily show that $\inf_\mathcal{X} C=x \eta_0/\rho_\infty$, and
  so there is no optimal schedule.  The $\limsup$-condition is needed
  to rule out the optimality of deferring part of the order
  indefinitely.

\item Including a discount factor with locally Lebesgue-integrable
  discount rate $\bar{r} = (\bar{r}_t) \geq 0$ in our mark-up costs is
  equivalent to considering $\widetilde{\delta}_t \set \delta_t
  \exp(\int_0^t \bar{r_s} \,ds)$ and $\widetilde{r}_t \set
  r_t+\bar{r}_t$, $t \geq 0$ instead of $\delta$ and $r$ above.
\end{enumerate}  
\end{Remark}

\section{Main result and sketch of its proof}
\label{sec:main-results}

The main result of this paper is the solution to
problem~\eqref{eq:3}. It describes up to what mark-up level our broker
should be willing to place orders at any point in time in order to
minimize mark-up costs:
\begin{Theorem}
  \label{thm:0}
  Suppose Assumptions~\ref{as:1} and~\ref{as:2} hold, let $\lambda_t
  \set \delta_t/\rho_t$, $\widetilde{\lambda}_t \set \sup_{u \geq t}
  \lambda_u$ and define
 \begin{equation}
   \label{eq:4}
   L^*_t = 
   \inf_{u > t}
   \frac{\widetilde{\lambda}_u-\widetilde{\lambda}_t}{\widetilde{\lambda}_u/\rho_u-\widetilde{\lambda}_t/\rho_t}, \quad t \geq 0\,,
 \end{equation}
 where we follow the convention that $0/0 \set 0$.

 Then the optimal order schedule strategy is to place orders at any
 time $t \geq 0$ if and while the resulting mark-up is no larger than
 $y^*L^*_t/\rho_t$, i.e.,
 \begin{equation}
   \label{eq:5}
  		X^*_t = \lambda_0(y^* L^*_0-\eta_0)^++ \int_{(0,t]} \lambda_s \,d
                \sup_{0 \leq v \leq s}\left\{(y^* L^*_v) \vee
                  \eta_0\right\}\,, \quad t \geq 0,
 \end{equation}
 provided the constant $y^*>0$ in~\eqref{eq:5} can be chosen such that
 $X^*_\infty = x$. This is the case if and only if the right side
 of~\eqref{eq:5} with $y^* \set 1$ remains bounded as $t \uparrow
 \infty$. If this is not the case, we have $\inf_{X \in \mathcal{X}}
 C(X)=0$ and the problem does not have a solution.
\end{Theorem}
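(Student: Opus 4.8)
The plan is to solve the problem by convex duality for the budget constraint $X_\infty=x$, after a change of variables that turns the (non-convex) cost \eqref{eq:3} into a tractable concave-envelope problem, and then to verify that the candidate $X^*$ of \eqref{eq:5}, together with a suitable multiplier $y^*>0$, satisfies the resulting first-order conditions.

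First I would symmetrize the cost. Since $\eta^X$ jumps by $\Delta_t X/\delta_t$, one has $\eta^X_{t-}+\Delta_t X/(2\delta_t)=\tfrac12(\eta^X_{t-}+\eta^X_{t+})$, so $C(X)=\tfrac12\int_{[0,\infty)}(\eta^X_{t-}+\eta^X_{t+})\,dX_t$. Passing to the increasing variable $Z_t\set\rho_t\eta^X_t=\eta_0+\int_{[0,t]}\lambda_s^{-1}\,dX_s$ (so $dX_t=\lambda_t\,dZ_t$ and $Z_{0-}=\eta_0$) yields the equivalent problem
\begin{equation*}
  \text{minimize}\quad C=\tfrac12\int_{[0,\infty)}\frac{\lambda_t}{\rho_t}\,d(Z^2)_t \quad\text{over increasing }Z\text{ with }Z_{0-}=\eta_0,\ \int_{[0,\infty)}\lambda_t\,dZ_t=x.
\end{equation*}
I would stress that this cost is \emph{not} convex in $X$ (its interaction kernel $\rho_{s\wedge t}/(\rho_{s\vee t}\delta_{s\wedge t})$ fails to be positive semidefinite once $\delta$ varies), so a genuine reformulation is needed rather than a naive tangent-plane argument. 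Introducing a multiplier $y>0$ for the budget, I reduce everything to minimizing the Lagrangian $\mathcal{L}_y(Z)\set C-y\int\lambda_t\,dZ_t$ over all admissible increasing $Z$: by the standard sufficiency principle, if $Z^*$ minimizes $\mathcal{L}_{y^*}$ over this set and meets the budget, then $X^*$ is optimal.

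The heart of the argument is to solve this inner minimization by reparametrizing by the level $z$ of $Z$. Writing $\tau(z)=\inf\{t:Z_t\ge z\}$ (increasing), one gets $\mathcal{L}_y(Z)=\int_{\eta_0}^{Z_\infty}\bigl(z\,\lambda_{\tau(z)}/\rho_{\tau(z)}-y\lambda_{\tau(z)}\bigr)\,dz$, so that at each level the integrand is the linear form $za-yb$ evaluated at the planar point $(a,b)=(\lambda_\tau/\rho_\tau,\lambda_\tau)$. Minimizing over $\tau$ is thus a support-function computation over the point set $\{(\lambda_u/\rho_u,\lambda_u)\}$, in which only its upper-left boundary matters; I would show this boundary is carried by the future supremum $\widetilde{\lambda}$, and that the minimal per-level value is governed by the least concave majorant of $\widetilde{\lambda}$ regarded as a function of $\widetilde{\lambda}/\rho$. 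The slope of this envelope is exactly $L^*_t$ of \eqref{eq:4}, and the first-order contact condition (slope $=z/y$) reads $z=yL^*_{\tau(z)}$; since $Z$ must be increasing and $\ge\eta_0$, the minimizer is the running maximum $Z^*_t=\sup_{0\le v\le t}\{(yL^*_v)\vee\eta_0\}$, which is precisely \eqref{eq:5} after undoing $dX=\lambda\,dZ$.

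It then remains to verify rigorously that this $Z^*$ is a \emph{global} minimizer of $\mathcal{L}_y$. Using the infimum in \eqref{eq:4} I would establish the two envelope inequalities: $z\,\lambda_t/\rho_t-y\lambda_t$ never drops below the envelope value (dual feasibility, giving $\mathcal{L}_y(Z)\ge\mathcal{L}_y(Z^*)$ for every admissible $Z$), with equality on the contact set $\{t:\rho_t\eta^{X^*}_t=yL^*_t\}$ on which $Z^*$ increases (complementary slackness). With $Z^*$ thus minimizing $\mathcal{L}_y$ for each $y$, I match the budget: $y\mapsto X^{*}_\infty$ is monotone, and by the separately established equivalence it can be set equal to $x$ by some $y^*>0$ exactly when the $y=1$ schedule stays bounded; this $X^*$ is then admissible and optimal. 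In the complementary case I would exhibit a minimizing sequence deferring mass to times with $\lambda\to0$ to force $\inf_{\mathcal{X}}C=0$ and non-existence. The \emph{main obstacle} is the envelope step: proving that the monotonicity constraint on $\tau(\cdot)$ is resolved exactly by concavification, that the upper-left boundary is encoded by $\widetilde{\lambda}$ and its slope by the infimum \eqref{eq:4} (including the $0/0$ convention and the flats of $\widetilde{\lambda}$), all while $\delta$ is merely bounded and upper-semicontinuous rather than of bounded variation.
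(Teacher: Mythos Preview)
Your plan is sound and takes a genuinely different route from the paper. Both you and the paper start with the same change of variables $Z=\rho\eta^X$ (the paper's Proposition~\ref{pro:0}), but then diverge. The paper first proves a separate \emph{convexification} result (Theorem~\ref{thm:1}, via a policy-improvement Lemma~\ref{lem:2}): every schedule can be improved to one supported on $\{\widetilde\lambda=\lambda\}$, so that the non-convex problem~\eqref{eq:7} has the same value as the convex problem~\eqref{eq:8} with $(\widetilde\kappa,\widetilde\lambda)$ in place of $(\kappa,\lambda)$; only then are first-order conditions derived (Proposition~\ref{pro:2}) and solved by a time change plus the concave envelope $\widehat\Lambda$ (Theorem~\ref{thm:2}, Corollary~\ref{cor:1}); the identity $\sup_{v\le s}L^*_v=\partial\widehat\Lambda_{\widetilde\kappa_s}$ is the final translation. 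You instead attach a multiplier, reparametrize $\mathcal L_y$ by the level $z$ of $Z$, and read the per-level integrand as the linear form $z a-yb$ on the planar set $\{(\kappa_t,\lambda_t)\}$, so that the concave envelope and its slope $L^*$ appear directly as the support-function solution; monotonicity of the contact point $\tau^*(z)$ then follows because on the upper boundary one has $\lambda_t/\kappa_t=\rho_t$, and the chord-from-origin slope of a concave envelope through the origin is decreasing. What each buys: the paper's detour produces stand-alone structural results (the convexification theorem and the first-order characterization~\eqref{eq:9}) that are of independent interest; your route is more direct, makes the geometric meaning of $L^*$ as an envelope slope transparent from the outset, and packages global optimality as a clean dual-feasibility/complementary-slackness verification. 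The step you flag as the ``main obstacle'' --- that the relevant upper boundary is carried by points with $\lambda=\widetilde\lambda$ and that the monotone $\tau^*$ saturates the pointwise bound under mere upper-semicontinuity of $\delta$ --- is real, and its analytic content is essentially that of the paper's Lemma~\ref{lem:1} and Lemma~\ref{lem:2}; you should also be explicit about the free endpoint $Z_\infty$ in the Lagrangian (the well-posed case is exactly when $\inf_{\text{all }Z}\mathcal L_y>-\infty$, equivalently $|\partial\widehat\Lambda|_{\mathbf L^2}<\infty$).
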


The following results outline the proof of this theorem and may be of
independent interest. Our first auxiliary result provides a
mathematically more convenient formulation of problem~\eqref{eq:3}:

\begin{Proposition}
  \label{pro:0}
  Suppose Assumptions~\ref{as:1} and~\ref{as:2} hold, let $\lambda
  \set \delta/\rho$, $\kappa \set \lambda/\rho=\delta/ \rho^2$ and
  define, for increasing and right-continuous $Y=(Y_t)_{t \geq 0}$,
 \begin{displaymath}
   K(Y) \set \frac{1}{2}\int_{[0,\infty)} \kappa_t \,d(Y^2_t)\,.
 \end{displaymath}
 Then
  \begin{equation}
    \label{eq:6}
    Y_t = \eta_0+\int_{[0,t]} \frac{dX_s}{\lambda_s}, \, Y_{0-} \set \eta_0, \mtext{and}
    X_t = \int_{[0,t]} \lambda_s \,dY_s, \, X_{0-} \set 0, \quad t \geq 0,
  \end{equation}
  define mappings from $\mathcal{X}$ to
  \begin{displaymath}
  \mathcal{Y} \set \descr{(Y_t)_{t \geq 0}
  \mtext{right-cont., incr.}}{Y_{0-} \set \eta_0, \int_{[0,\infty)} \lambda_t \,dY_t=x, K(Y)<\infty}
  \end{displaymath}
  and vice versa such that
 \begin{displaymath}
   C(X) = K(Y) \,.
 \end{displaymath}
\end{Proposition}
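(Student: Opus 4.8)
The plan is to recognize that the substitution \eqref{eq:6} is nothing but the transformation $Y_t = \rho_t\,\eta^X_t$ written in integrated form, and to exploit this to reduce both the admissibility constraints and the cost functional. First I would establish this dictionary: multiplying \eqref{eq:2} by $\rho_t$ and using $\rho_s/\delta_s = 1/\lambda_s$ gives
\[
  \rho_t\,\eta^X_t = \eta_0 + \int_{[0,t]} \frac{\rho_s}{\delta_s}\,dX_s = \eta_0 + \int_{[0,t]} \frac{dX_s}{\lambda_s} = Y_t,
\]
so the process $Y$ of \eqref{eq:6} is exactly the resilience-rescaled mark-up $\rho\,\eta^X$. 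Since $\rho$ is continuous (Assumption~\ref{as:1} makes $t \mapsto \int_0^t r_s\,ds$ absolutely continuous), this also yields $Y_{t-} = \rho_t\,\eta^X_{t-}$, the identity I will need in the cost computation.

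Next I would verify that \eqref{eq:6} defines mutually inverse maps between $\mathcal{X}$ and $\mathcal{Y}$. Reading $dX$ and $dY$ as Stieltjes measures on $[0,\infty)$, the two prescriptions read $dY = \lambda^{-1}\,dX$ and $dX = \lambda\,dY$, which are genuine reciprocals on the set $\{\lambda>0\}$; monotonicity and right-continuity of the images are then immediate, the initial conditions $X_{0-}=0$, $Y_{0-}=\eta_0$ are built into the closed-at-zero integrals, and the terminal and cost constraints match because $X_\infty = \int_{[0,\infty)}\lambda\,dY$ and (by the identity below) $C(X)=K(Y)$. The one point requiring care is the degenerate set $\{\delta=0\}=\{\lambda=0\}$, where $\lambda^{-1}$ is undefined: here I would argue that finiteness of $C(X)$ forces $dX$ to charge no mass on $\{\delta=0\}$ (the factor $\rho_s/\delta_s=\infty$ in \eqref{eq:2} would otherwise render $\eta^X$, and hence $C$, infinite), so that the $Y$ produced from an admissible $X$ lives on $\{\lambda>0\}$ and maps back to $X$; symmetrically, since $\kappa=\delta/\rho^2$ vanishes on $\{\delta=0\}$, any mass of $dY$ there is invisible to both $K$ and the constraint $\int\lambda\,dY=x$, so it may be discarded and the reciprocal relation recovered. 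Handling this null set cleanly is the main obstacle, everything else being routine bookkeeping.

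Finally I would prove the cost identity by direct substitution. Inserting $dX_t=\lambda_t\,dY_t$, $\Delta_t X=\lambda_t\,\Delta_t Y$ (both read off from $dX=\lambda\,dY$) and $\eta^X_{t-}=Y_{t-}/\rho_t$ into \eqref{eq:3}, and using $\lambda_t/\rho_t=\kappa_t$ together with $\lambda_t^2/\delta_t=\delta_t/\rho_t^2=\kappa_t$, the two contributions collapse to
\[
  C(X)=\int_{[0,\infty)} \kappa_t\left(Y_{t-}+\tfrac12\Delta_t Y\right)dY_t.
\]
It then remains to recognize the integrand factor as half the differential of $Y^2$: Stieltjes integration by parts for the increasing right-continuous $Y$ gives $d(Y_t^2)=(Y_t+Y_{t-})\,dY_t=(2Y_{t-}+\Delta_t Y)\,dY_t$, i.e.\ $(Y_{t-}+\tfrac12\Delta_t Y)\,dY_t=\tfrac12\,d(Y_t^2)$, whence $C(X)=\tfrac12\int_{[0,\infty)}\kappa_t\,d(Y_t^2)=K(Y)$. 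This simultaneously confirms that $C(X)<\infty$ if and only if $K(Y)<\infty$, closing the verification that the maps respect the defining constraints of $\mathcal{X}$ and $\mathcal{Y}$.
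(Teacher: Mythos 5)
Your route is the paper's own: identify $Y=\rho\,\eta^X$ via \eqref{eq:2}, read off $dX=\lambda\,dY$, and collapse the cost by the Stieltjes product rule $d(Y_t^2)=(2Y_{t-}+\Delta_t Y)\,dY_t$, with finiteness of $C(X)$ keeping $dX$ off $\{\delta=0\}$ in the direction $\mathcal{X}\to\mathcal{Y}$. That direction of your argument is complete and matches the paper (one quibble: for an atom of $dX$ at the \emph{last} trading time sitting on $\{\delta=0\}$, it is the half-spread term $\Delta_t X/(2\delta_t)$, not the explosion of $\eta^X$ at later times, that forces $C(X)=\infty$; your parenthetical invokes only the latter).

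The genuine problem is in the converse direction, namely the claim that $dY$-mass on $\{\lambda=0\}$ is ``invisible to both $K$ and the constraint, so it may be discarded and the reciprocal relation recovered.'' Such mass is indeed invisible to the linear constraint $\int\lambda\,dY=x$, but it is \emph{not} invisible to $K$: the functional $K$ is quadratic in $Y$, so mass placed where $\kappa=0$ costs nothing at that instant but permanently raises the level of $Y$, and every later increment is charged $\kappa_t(2Y_{t-}+\Delta_t Y)\,dY_t$, which grows with that level. Concretely, let $\delta=1$ on $[0,1]\cup[2,3]$ and $\delta=0$ elsewhere (Assumption~\ref{as:2} holds), and take $Y=\eta_0+a\,1_{[1.5,\infty)}+b\,1_{[2.5,\infty)}$ with $\lambda_{2.5}\,b=x$. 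Then $Y\in\mathcal{Y}$, the induced schedule is $X=x\,1_{[2.5,\infty)}$, and one computes $K(Y)=(\eta_0+a)x/\rho_{2.5}+x^2/(2\delta_{2.5})$ while $C(X)=\eta_0 x/\rho_{2.5}+x^2/(2\delta_{2.5})$, so that $K(Y)-C(X)=ax/\rho_{2.5}>0$ whenever $a>0$. Hence for such $Y$ you can only conclude $C(X)\le K(Y)$, with equality exactly when $dY$ does not charge $\{\lambda=0\}$; in particular the two maps are not mutually inverse on all of $\mathcal{Y}$, since the composition $\mathcal{Y}\to\mathcal{X}\to\mathcal{Y}$ deletes the degenerate mass. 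The one-sided bound, together with the exact equality in the forward direction, is still enough to identify the two infima and to carry minimizers back and forth, which is all the proposition is used for downstream; but as written your step asserts a false identity. (The paper's own proof is admittedly just as cavalier here---``by the same reasoning as above''---and its literal claim suffers from the same edge case; the clean fix is to state the converse either for $Y$ with $dY(\{\lambda=0\})=0$ or with ``$\le$'' in place of ``$=$''.)
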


As a result, with these choices of $\kappa$ and $\lambda$,
optimization problem~\eqref{eq:3} is equivalent to the following
problem:
 \begin{equation}
  \label{eq:7}
 \text{Minimize} \;  K(Y) \set \frac{1}{2}\int_{[0,\infty)} \kappa_t
 \,d(Y^2_t) \mtext{subject to} Y \in \mathcal{Y} \,.
\end{equation}
Neither problem~\eqref{eq:3} nor problem~\eqref{eq:7} is convex in
general:

\begin{Proposition}
  \label{pro:1}
  For upper-semicontinuous $\kappa$, the functional $K=K(Y)$
  of~\eqref{eq:7} is (strictly) convex for right-continuous,
  increasing $Y$ with $Y_{0-}=\eta_0$ if and only if $\kappa$ is
  (strictly) positive and (strictly) decreasing.
\end{Proposition}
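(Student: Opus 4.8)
The plan is to recognise $K$ as a quadratic functional of the Lebesgue--Stieltjes measure $\mu \set dY$ and to reduce convexity to positive semidefiniteness of an explicit symmetric kernel. First I would put $K$ into a manifestly quadratic form: since $Y$ is increasing and right-continuous, integration by parts gives $d(Y^2_t)=(Y_{t-}+Y_t)\,dY_t$, whence
\[
  K(Y)=\tfrac12\int_{[0,\infty)}\kappa_t\,(Y_{t-}+Y_t)\,dY_t .
\]
Substituting $Y_{t-}=\eta_0+\mu([0,t))$ and $Y_t=\eta_0+\mu([0,t])$ and interchanging the order of integration (legitimate on the effective domain $\{K<\infty\}$, since $\kappa$ is bounded and, being upper-semicontinuous, Borel) yields
\[
  K(Y)=\eta_0\int_{[0,\infty)}\kappa_t\,d\mu(t)+\tfrac12\iint_{[0,\infty)^2}\kappa_{s\vee t}\,d\mu(s)\,d\mu(t),
\]
the diagonal $\{s=t\}$ being accounted for correctly because $\mathbf{1}\{s<t\}+\mathbf{1}\{s\le t\}$ symmetrises, after swapping $s$ and $t$, to the kernel $\kappa_{s\vee t}$.

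Next, since $Y\mapsto\mu$ is affine on the convex set of admissible schedules, the restriction of $K$ to a segment $Y^\theta=(1-\theta)Y^0+\theta Y^1$ is a quadratic polynomial in $\theta$ whose second derivative equals $Q(\nu)\set\iint\kappa_{s\vee t}\,d\nu(s)\,d\nu(t)$ with $\nu\set dY^1-dY^0$, the linear term playing no role. Hence $K$ is (strictly) convex if and only if the symmetric form $Q$ is (strictly) positive (semi)definite on the cone of differences of two positive measures, that is, on all signed measures. This equivalence is the conceptual core; the remaining work is to match definiteness of $Q$ to the two conditions on $\kappa$.

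For the \emph{only if} direction I would test $Q$ on atomic measures: a unit point mass at $t_0$ gives $Q=\kappa_{t_0}$, forcing $\kappa\ge0$, while the difference of unit point masses at $s_0<t_0$ gives $Q=\kappa_{s_0}-\kappa_{t_0}$, forcing $\kappa$ nonincreasing; the strict versions show that strict convexity additionally excludes $\kappa_{t_0}=0$ and $\kappa_{s_0}=\kappa_{t_0}$. For the \emph{if} direction I would use the representation of a nonnegative nonincreasing $\kappa$, $\kappa_{s\vee t}=\kappa_\infty+\int\mathbf{1}\{s<a\}\mathbf{1}\{t<a\}\,\nu_\kappa(da)$ with $\nu_\kappa\set -d\kappa\ge0$ and $\kappa_\infty=\lim_{t\to\infty}\kappa_t\ge0$ (the convention at atoms of $\nu_\kappa$ being pinned down by upper semicontinuity), which after another use of Fubini exhibits $Q$ as a sum of squares,
\[
  Q(\nu)=\kappa_\infty\,\nu([0,\infty))^2+\int_{[0,\infty)}\nu([0,a))^2\,\nu_\kappa(da)\ge0 .
\]
Strict positivity together with strict monotonicity makes $\nu_\kappa$ of full support, so $Q(\nu)=0$ forces $\nu([0,a))=0$ on a dense set of $a$ and, by left-continuity of $a\mapsto\nu([0,a))$, everywhere, i.e.\ $\nu\equiv0$; this is strict definiteness.

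The step I expect to be most delicate is the first one: one must track the atoms of $\mu$ (the jumps of $Y$) so that their diagonal contribution $\tfrac12\sum_t\kappa_t\,(Y_t-Y_{t-})^2$ is incorporated correctly and the kernel emerges as exactly $\kappa_{s\vee t}$ rather than a one-sided variant. Getting this symmetrisation right is essential, since a mistreated diagonal would corrupt both the sum-of-squares identity used for sufficiency and the clean two-point computations used for necessity.
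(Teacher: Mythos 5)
Your proposal is, in substance, the paper's own proof in quadratic-form clothing. Your two atomic tests for necessity are exactly the paper's two-jump schedules $Y=\eta_0+a1_{[s,\infty]}+b1_{[t,\infty]}$ (the paper reads off positive semidefiniteness of the full $2\times 2$ Hessian in $(a,b)$, you test the directions $(1,0)$, $(0,1)$ and $(1,-1)$ of that same quadratic, which suffices for this matrix); and your sum-of-squares identity for $Q$ is precisely the paper's Fubini rewriting $K(Y)=\frac12\left(\kappa_\infty(Y^2_\infty-\eta_0^2)-\int_{[0,\infty)}(Y^2_t-\eta_0^2)\,d\kappa_t\right)$ after substituting $Y_t=\eta_0+\mu([0,t])$.

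There is, however, one concrete error, and it sits exactly at the bracket convention you flagged but resolved the wrong way. Upper semicontinuity plus monotonicity makes $\kappa$ \emph{left}-continuous, and for a left-continuous decreasing $\kappa$ one has $\nu_\kappa([u,\infty))=\kappa_u-\kappa_\infty$ whereas $\nu_\kappa((u,\infty))=\kappa_{u+}-\kappa_\infty$. The correct representation is therefore
\begin{displaymath}
  \kappa_{s\vee t}=\kappa_\infty+\int \mathbf{1}\{s\le a\}\,\mathbf{1}\{t\le a\}\,\nu_\kappa(da)
\end{displaymath}
with non-strict inequalities; with your strict inequalities the identity fails at every atom of $\nu_\kappa$. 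For instance, take $\kappa=1_{[0,1]}$ (admissible: upper-semicontinuous, decreasing, nonnegative, and of the finite-horizon type the paper cares about) and $\nu=\delta_1$: then $Q(\nu)=\kappa_1=1$, while your right-hand side gives $\nu([0,1))^2=0$. So, as written, the sufficiency step does not establish $Q\ge 0$ whenever $\kappa$ jumps. The repair is mechanical: use closed brackets, replace $\nu([0,a))$ by $\nu([0,a])$ in the sum of squares, and invoke right-continuity of $a\mapsto\nu([0,a])$ (rather than left-continuity of $a\mapsto\nu([0,a))$) in the strict-definiteness argument. A second, smaller gloss: for unbounded schedules $dY^1-dY^0$ need not be a finite signed measure, so ``positive semidefinite on all signed measures'' is not quite the right statement; either handle the cross term by Cauchy--Schwarz for the form $B(\mu^0,\mu^1)$, or---simpler, and then literally the paper's route---read the corrected identity as exhibiting $K$ itself as a sum of convex functionals of $Y$, so that no signed measure ever needs to be formed.
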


Convexity can always be arranged for, though, in the following sense:

\begin{Theorem}
  \label{thm:1}
  Let $\lambda$, $\kappa$ be as in Proposition~\ref{pro:0}. Then
  optimization problem~\eqref{eq:7} has the same value as the convex
  optimization problem
 \begin{equation}
  \label{eq:8}
  \text{Minimize} \; \widetilde{K}(\widetilde{Y}) \set 
  \frac{1}{2}\int_{[0,\infty)} \widetilde{\kappa}_t
  \,d(\widetilde{Y}^2_t)
  \mtext{subject to} \widetilde{Y} \in \widetilde{\mathcal{Y}}
\end{equation}
where $\widetilde{\kappa}_t \set \widetilde{\lambda}_t/\rho_t$,
$\widetilde{\lambda}_t \set \sup_{u \geq t} \lambda_u$, $t \geq 0$,
and
\begin{displaymath}
  \widetilde{\mathcal{Y}} \set \descr{(\widetilde{Y}_t)_{t \geq 0}
  \mtext{right-cont., incr.}}{\widetilde{Y}_{0-} \set \eta_0, \int_{[0,\infty)} \widetilde{\lambda}_t \,d\widetilde{Y}_t=x, \widetilde{K}(\widetilde{Y})<\infty}\,.
\end{displaymath}
Moreover, any solution $\widetilde{Y}^*$ to~\eqref{eq:8} with
$\{d\widetilde{Y}^*>0\}\subset\{\widetilde{\lambda}=\lambda\}$ will
also be a solution to~\eqref{eq:7}.
\end{Theorem}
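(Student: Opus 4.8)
The plan is to prove the two assertions in turn: first that problems \eqref{eq:7} and \eqref{eq:8} have the same value, which I establish by two matching inequalities, and then the transfer principle for support-restricted minimizers, which will follow essentially for free once the values agree. Throughout I record the elementary comparisons $\widetilde\lambda \geq \lambda$ and hence $\widetilde\kappa = \widetilde\lambda/\rho \geq \lambda/\rho = \kappa$, with equality in both exactly on $\{\widetilde\lambda=\lambda\}$. Moreover $\widetilde\lambda$ is by construction the least non-increasing majorant of $\lambda$, and one checks it is left-continuous; since $\rho$ is continuous and increasing this makes $\widetilde\kappa$ positive, non-increasing and upper-semicontinuous, so that Proposition~\ref{pro:1} yields convexity of $\widetilde K$ and justifies calling \eqref{eq:8} a convex problem.

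For the inequality $\inf_{\widetilde{\mathcal Y}} \widetilde K \leq \inf_{\mathcal Y} K$ I use a cost-nonincreasing reparametrization at the level of the share process. Given $Y \in \mathcal Y$, I define $\widetilde Y$ through $d\widetilde Y_t \set (\lambda_t/\widetilde\lambda_t)\,dY_t$, $\widetilde Y_{0-}\set\eta_0$, with the convention $0/0\set0$ of \eqref{eq:4}. Since $\lambda/\widetilde\lambda\in[0,1]$ this yields a right-continuous increasing $\widetilde Y \leq Y$ with $\int\widetilde\lambda_t\,d\widetilde Y_t = \int\lambda_t\,dY_t = x$, so $\widetilde Y\in\widetilde{\mathcal Y}$. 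Using $d(Z^2_t)=(Z_t+Z_{t-})\,dZ_t$ and the cancellation $\widetilde\kappa_t(\lambda_t/\widetilde\lambda_t)=\kappa_t$, I get $\widetilde\kappa_t\,d(\widetilde Y^2_t)=\kappa_t(\widetilde Y_t+\widetilde Y_{t-})\,dY_t \leq \kappa_t(Y_t+Y_{t-})\,dY_t=\kappa_t\,d(Y^2_t)$, where the inequality is just $\widetilde Y\leq Y$. Integrating gives $\widetilde K(\widetilde Y)\leq K(Y)$, and passing to the infimum proves this direction. This step is routine.

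The reverse inequality $\inf_{\mathcal Y} K \leq \inf_{\widetilde{\mathcal Y}} \widetilde K$ is the substantive one and is where I expect the main difficulty. The idea is to defer every trade to the time at which $\lambda$ realizes its running supremum. I set $\tau(t)\set\inf\{u\geq t : \lambda_u = \widetilde\lambda_t\}$; Assumption~\ref{as:2} is precisely what makes this well posed, since upper-semicontinuity of $\lambda=\delta/\rho$ together with $\limsup_{u\uparrow\infty}\lambda_u=0$ guarantees the supremum $\widetilde\lambda_t$ is attained at a finite $u\geq t$, whence $\lambda_{\tau(t)}=\widetilde\lambda_t$ and $\tau(t)\in\{\widetilde\lambda=\lambda\}$. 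A short case analysis, according to whether $\widetilde\lambda$ is locally flat or strictly decreasing at $t$, shows that $\tau$ is non-decreasing. Given $\widetilde Y\in\widetilde{\mathcal Y}$, I then let $Y$ be the increasing right-continuous process whose Stieltjes measure $dY$ is the image of $d\widetilde Y$ under $\tau$; monotonicity of $\tau$ makes this well defined and, crucially, yields the change-of-variables identity $\int g_s\,d(Y^2_s)=\int g_{\tau(t)}\,d(\widetilde Y^2_t)$ for measurable $g\geq0$. Feasibility is immediate from $\int\lambda_s\,dY_s=\int\lambda_{\tau(t)}\,d\widetilde Y_t=\int\widetilde\lambda_t\,d\widetilde Y_t=x$, and for the cost I use $\kappa_{\tau(t)}=\widetilde\lambda_t/\rho_{\tau(t)}\leq\widetilde\lambda_t/\rho_t=\widetilde\kappa_t$ (as $\tau(t)\geq t$ and $\rho$ is increasing) to obtain $K(Y)=\tfrac12\int\kappa_{\tau(t)}\,d(\widetilde Y^2_t)\leq\tfrac12\int\widetilde\kappa_t\,d(\widetilde Y^2_t)=\widetilde K(\widetilde Y)$. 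The technical heart is exactly the verification of this change-of-variables formula when $\tau$ has flat stretches (several trading times collapsing into a single block trade of $Y$) and jumps, alongside the attainment and monotonicity of $\tau$; I note that the naive ``same-$X$'' reparametrization fails here, which is why the forward transport is genuinely needed.

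Finally, the transfer principle follows from the equality of values now established. If $\widetilde Y^*$ solves \eqref{eq:8} and $\{d\widetilde Y^*>0\}\subset\{\widetilde\lambda=\lambda\}$, then on the support of $d\widetilde Y^*$ one has $\widetilde\lambda=\lambda$ and $\widetilde\kappa=\kappa$; since $d((\widetilde Y^*)^2)$ is absolutely continuous with respect to $d\widetilde Y^*$, this gives both $\int\lambda_t\,d\widetilde Y^*_t=\int\widetilde\lambda_t\,d\widetilde Y^*_t=x$, so $\widetilde Y^*\in\mathcal Y$, and $K(\widetilde Y^*)=\widetilde K(\widetilde Y^*)=\inf_{\widetilde{\mathcal Y}}\widetilde K=\inf_{\mathcal Y}K$. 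Hence $\widetilde Y^*$ is feasible for \eqref{eq:7} and attains its value, i.e.\ it solves \eqref{eq:7}.
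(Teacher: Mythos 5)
Your proposal is correct, but it is organized quite differently from the paper's proof, so a comparison is worthwhile. The paper proves both inequalities with a single policy-improvement lemma (Lemma~\ref{lem:2}): using the interval decomposition of the complement of $\{d\widetilde{\lambda}<0\}$ from Lemma~\ref{lem:1}, it simultaneously scales trades by $\lambda_s/\lambda_{r_n}$ \emph{and} transports them to the right endpoints $r_n$ of the flat stretches of $\widetilde{\lambda}$, producing from any admissible $Y$ an improved process that lies in $\mathcal{Y}\cap\widetilde{\mathcal{Y}}$ with the same value of $\int\lambda\,dY$; applying this once to $(\lambda,K)$ and once to $(\widetilde{\lambda},\widetilde{K})$ gives both the value equality and the transfer statement. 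You decouple the two mechanisms: for $\inf_{\widetilde{\mathcal{Y}}}\widetilde{K}\le\inf_{\mathcal{Y}}K$ you use pure scaling $d\widetilde{Y}=(\lambda/\widetilde{\lambda})\,dY$, where the cancellation $\widetilde{\kappa}\,\lambda/\widetilde{\lambda}=\kappa$ together with $\widetilde{Y}\le Y$ does all the work --- this is genuinely shorter and more transparent than the paper's treatment of that direction; for the reverse inequality you push $d\widetilde{Y}$ forward under the first-attainment map $\tau(t)=\inf\{u\ge t:\lambda_u=\widetilde{\lambda}_t\}$, which is well defined (upper semicontinuity of $\lambda$ plus $\limsup_{u\uparrow\infty}\lambda_u=0$ give attainment), monotone, and satisfies $\lambda_{\tau(t)}=\widetilde{\lambda}_{\tau(t)}=\widetilde{\lambda}_t$ and $\rho_{\tau(t)}\ge\rho_t$, so feasibility and $K(Y)\le\widetilde{K}(\widetilde{Y})$ follow. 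The step you flag as the technical heart --- that the monotone pushforward also intertwines the \emph{squared} measures, $\int g\,d(Y^2_s)=\int g_{\tau(t)}\,d(\widetilde{Y}^2_t)$ --- is indeed the crux, and it is true: by monotonicity of $\tau$ the set $\{\tau\le u\}$ is an initial segment $[0,c]$ or $[0,c)$, whence $Y_u=\widetilde{Y}_c$ resp.\ $\widetilde{Y}_{c-}$, so both measures assign to $[0,u]$ the value $\widetilde{Y}_c^2-\eta_0^2$ resp.\ $\widetilde{Y}_{c-}^2-\eta_0^2$, and measures agreeing on all initial segments coincide. The paper's interval-by-interval square expansions avoid this pushforward formalism at the price of longer explicit estimates, and its unified lemma is reused verbatim for the ``Moreover'' part, whereas your transfer argument is a one-liner from the value equality. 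One small wording caveat there: the hypothesis concerns the set of points of increase $\{d\widetilde{Y}^*>0\}$, not the closed support of $d\widetilde{Y}^*$; what you actually need is that $d\widetilde{Y}^*$ and $d((\widetilde{Y}^*)^2)$ are carried by $\{d\widetilde{Y}^*>0\}$, which holds because the complement of this set is covered by countably many intervals of constancy of $\widetilde{Y}^*$ and is therefore null for both measures --- with that said, your conclusion $K(\widetilde{Y}^*)=\widetilde{K}(\widetilde{Y}^*)$ and $\widetilde{Y}^*\in\mathcal{Y}$ stands.
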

\begin{Remark}
  For an increasing process $Y=(Y_t)_{t \geq 0}$ we say that $t$ is a
  point of increase towards the right and write $dY_t>0$ if
  $Y_{t-}<Y_u$ for any $u>t$. A similar convention applies to
  decreasing processes and points of decrease towards the right.
\end{Remark}

The next proposition describes the (necessary and sufficient)
first-order conditions for optimality in problem~\eqref{eq:8}. As one
would expect, the broker has to strike a balance between the impact of
current orders on future mark-up costs (as represented by the left
side of~\eqref{eq:9} below) and the current prospect on future market
conditions (as represented by the decreasing envelope
$\widetilde{\lambda}$ of market depth over resilience on the right
side of that equation):

\begin{Proposition}
  \label{pro:2}
  For $\widetilde{\kappa}$, $\widetilde{\lambda} \geq 0$ as in
  Theorem~\ref{thm:1}, $\widetilde{Y}^* \in \widetilde{\mathcal{Y}}$
  solves~\eqref{eq:8} if and only if there is a constant $y>0$ such
  that
  \begin{equation}
    \label{eq:9}
    -\int_{[t,\infty)} \widetilde{Y}^*_u
    \,d\widetilde{\kappa}_u \geq y
    \widetilde{\lambda}_t \mtext{for $t \geq 0$ with `$=$' whenever} d\widetilde{Y}^*_t>0\,.
  \end{equation}
\end{Proposition}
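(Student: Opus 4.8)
The plan is to recognize problem~\eqref{eq:8} as a convex optimization problem over a convex feasible set and apply first-order (variational) characterizations of optimality. Since Proposition~\ref{pro:1} guarantees that $\widetilde{K}$ is convex (note $\widetilde\kappa=\widetilde\lambda/\rho$ is the ratio of the decreasing function $\widetilde\lambda$ over the increasing function $\rho$, hence decreasing and nonnegative, so the hypothesis of Proposition~\ref{pro:1} applies), the minimizer is characterized by the condition that the directional derivative of the objective is nonnegative in every feasible direction. Concretely, for any competitor $\widetilde Y\in\widetilde{\mathcal Y}$, the perturbation $\widetilde Y^*+\epsilon(\widetilde Y-\widetilde Y^*)$ stays feasible for $\epsilon\in[0,1]$ because $\widetilde{\mathcal Y}$ is convex and the linear constraint $\int \widetilde\lambda_t\,d\widetilde Y_t=x$ is preserved. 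I would compute $\tfrac{d}{d\epsilon}\widetilde K\big|_{\epsilon=0^+}\geq 0$.

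\textbf{Key computational step.} Writing $\widetilde K(\widetilde Y)=\tfrac12\int\widetilde\kappa_t\,d(\widetilde Y^2_t)$, the derivative of the quadratic functional in the direction $H=\widetilde Y-\widetilde Y^*$ is $\int_{[0,\infty)}\widetilde\kappa_t\,\widetilde Y^*_{t}\,dH_t$ (taking appropriate left/right limits in the Stieltjes integral). The natural move is integration by parts to transfer the differential off $H$ and onto the measure $d\widetilde\kappa$: this rewrites the first-order inequality, for every feasible $H$, in terms of the ``adjoint'' quantity $-\int_{[t,\infty)}\widetilde Y^*_u\,d\widetilde\kappa_u$ paired against the increments of $\widetilde Y$. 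Because the only constraint coupling the increments is the single linear equality $\int\widetilde\lambda_t\,d\widetilde Y_t=x$, Lagrangian duality (or a direct exchange-of-mass argument: shift an infinitesimal amount of the $\widetilde\lambda$-weighted order from one time to another) produces a single multiplier $y$ such that the marginal cost of placing order mass at time $t$, namely $-\int_{[t,\infty)}\widetilde Y^*_u\,d\widetilde\kappa_u$, must equal $y\,\widetilde\lambda_t$ at every point of increase $d\widetilde Y^*_t>0$ and be $\geq y\,\widetilde\lambda_t$ elsewhere — precisely~\eqref{eq:9}. Positivity $y>0$ follows since $x>0$ forces nontrivial trading. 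The sufficiency direction then follows by reversing the argument: given~\eqref{eq:9}, convexity guarantees that the first-order inequality implies global optimality.

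\textbf{The main obstacle} will be making the exchange-of-mass / Lagrange-multiplier argument rigorous at the level of measures rather than densities, since $\widetilde Y^*$ and its competitors may have atoms and singular parts, and the feasible directions $H$ are differences of increasing processes (hence signed measures of both signs). Care is needed to justify the integration by parts and the interchange of integrals when the integrators $d\widetilde\kappa$ and $dH$ are merely signed Stieltjes measures, and to confirm that the boundary terms at $0-$ and at $+\infty$ vanish — the latter requiring the decay built into Assumption~\ref{as:2} via $\limsup_{t\uparrow\infty}\lambda_t=0$, which forces $\widetilde\lambda_t\to 0$ and controls $\widetilde\kappa$ at infinity. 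A secondary subtlety is verifying that the complementary-slackness equality holds on the full support of $d\widetilde Y^*$ and not merely almost everywhere; this is handled by testing against one-sided perturbations that add or remove mass in arbitrarily small neighborhoods of a candidate point of increase.
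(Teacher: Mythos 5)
Your proposal is correct and follows essentially the same route as the paper: a convex-combination perturbation reduces optimality to the linear problem, Fubini/integration by parts turns its integrand into the adjoint quantity $-\int_{[t,\infty)}\widetilde{Y}^*_u\,d\widetilde{\kappa}_u$, the single linear constraint $\int\widetilde{\lambda}_t\,d\widetilde{Y}_t=x$ yields one multiplier $y>0$ with complementary slackness on $\{d\widetilde{Y}^*>0\}$, and convexity gives sufficiency. The technical obstacle you flag (product-rule/jump terms and boundary terms in the Stieltjes computations) is dissolved in the paper by first rewriting $\widetilde{K}(Y)=-\tfrac12\int_{[0,\infty)}(Y_t^2-\eta_0^2)\,d\widetilde{\kappa}_t$ via Fubini, using $\widetilde{\kappa}_\infty=0$ and the monotonicity of $\widetilde{\kappa}$, after which the directional-derivative expansion is pointwise in $Y_t$ and no product rule for Stieltjes measures is needed.
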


Constructing right-continuous increasing $\widetilde{Y}^* \geq 0$
satisfying the first order conditions of~\eqref{eq:9} can be done by
using a time-change and concave envelopes; see also Figure~\ref{fig:0}
below:

\begin{Theorem}
  \label{thm:2} Under Assumptions~\ref{as:1} and~\ref{as:2}, consider
  the level passage times ${\tau}_k \set \inf \descr{t \geq
    0}{\widetilde{\kappa}_t \leq k}$ and let $\widetilde{\Lambda}_k \set
  k \rho_{\tau_k}$, $k \in (0,\widetilde{\kappa}_0]$ and
  $\widetilde{\Lambda}_0 \set 0$. 

  Then $\widetilde{\Lambda}$ is a continuous increasing map on
  $[0,\widetilde{\kappa}_0]$. Its concave envelope
  $\widehat{{\Lambda}}$ is absolutely continuous with a
  left-continuous, decreasing density $\partial
  \widehat{{\Lambda}}=(\partial \widehat{\Lambda}_k)_{0 < k \leq
    \widetilde{\kappa}_0} \geq 0$. Moreover, letting $\partial
  \widehat{{\Lambda}}_0 \set \partial \widehat{{\Lambda}}_{0+}$, we
  have that for any $y>0$ and $\eta_0\geq0$, $\widetilde{Y}^*_t \set
  (y\partial \widehat{{\Lambda}}_{\widetilde{\kappa}_t}) \vee \eta_0$,
  $t \geq 0$, with $\widetilde{Y}^*_{0-}\set \eta_0$ yields a
  right-continuous increasing process satisfying~\eqref{eq:9}.
\end{Theorem}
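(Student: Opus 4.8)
The plan is to prove Theorem~\ref{thm:2} by carefully establishing each of the three claims in turn: regularity of $\widetilde{\Lambda}$, regularity of its concave envelope $\widehat{\Lambda}$, and finally that the candidate $\widetilde{Y}^*$ satisfies the first-order condition~\eqref{eq:9}.

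\textbf{Continuity and monotonicity of $\widetilde{\Lambda}$.} First I would analyze the level passage times $\tau_k = \inf\{t \geq 0 : \widetilde{\kappa}_t \leq k\}$. Since $\widetilde{\kappa}_t = \widetilde{\lambda}_t/\rho_t$ with $\widetilde{\lambda}$ the (upper-semicontinuous) decreasing envelope of $\lambda$ from Theorem~\ref{thm:1} and $\rho$ continuous and strictly increasing, $\widetilde{\kappa}$ is decreasing and right-continuous, tending to $0$ by the $\limsup$-condition in Assumption~\ref{as:2}. Thus $k \mapsto \tau_k$ is the generalized inverse of $\widetilde{\kappa}$ and is decreasing. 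The key identity to extract is that $\widetilde{\Lambda}_k = k\rho_{\tau_k} = \widetilde{\kappa}_{\tau_k}\rho_{\tau_k} = \widetilde{\lambda}_{\tau_k}$ at continuity points, so $\widetilde{\Lambda}$ records the value of the decreasing envelope at the passage level, expressed as a function of the level $k$. I would show $\widetilde{\Lambda}$ is increasing because as $k$ increases $\tau_k$ decreases and $\widetilde{\lambda}$ is decreasing in $t$; continuity should follow from the fact that any jump of $\tau$ (a flat stretch of $\widetilde{\kappa}$) corresponds to an interval where $k\rho_{\tau_k}$ varies continuously in $k$ at fixed $\tau_k$, while any jump of $\widetilde{\kappa}$ corresponds to an interval of $k$-values mapped to a single $\tau_k$, again giving continuity of the product $k\rho_{\tau_k}$. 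This matching of jumps is the standard mechanism by which a time-change removes discontinuities, and I expect it to yield continuity with only modest bookkeeping.

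\textbf{Regularity of the concave envelope.} The concave envelope $\widehat{\Lambda}$ of a continuous function on a compact interval $[0,\widetilde{\kappa}_0]$ is automatically concave and continuous, hence absolutely continuous on any compact subinterval away from the endpoints; its density $\partial\widehat{\Lambda}$ exists as a left-continuous decreasing function by general facts about concave functions (the left derivative exists everywhere, is decreasing, and is left-continuous). Nonnegativity of $\partial\widehat{\Lambda}$ follows from $\widehat{\Lambda}$ being increasing, which in turn follows from $\widetilde{\Lambda}$ being increasing and $\widehat{\Lambda} \geq \widetilde{\Lambda}$ with matching endpoint at $0$. This step is essentially citing the standard theory of concave functions, so I would keep it brief.

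\textbf{Verifying the first-order condition.} This is the heart of the proof and where the main obstacle lies. The candidate is $\widetilde{Y}^*_t = (y\,\partial\widehat{\Lambda}_{\widetilde{\kappa}_t}) \vee \eta_0$. I would first observe that since $\widetilde{\kappa}$ is decreasing in $t$ and $\partial\widehat{\Lambda}$ is decreasing in $k$, the composition $\partial\widehat{\Lambda}_{\widetilde{\kappa}_t}$ is increasing in $t$, so $\widetilde{Y}^*$ is indeed increasing; right-continuity comes from right-continuity of $\widetilde{\kappa}$ and left-continuity of $\partial\widehat{\Lambda}$ composed appropriately. The crux is to translate condition~\eqref{eq:9}, namely $-\int_{[t,\infty)} \widetilde{Y}^*_u \,d\widetilde{\kappa}_u \geq y\widetilde{\lambda}_t$ with equality when $d\widetilde{Y}^*_t > 0$, into a statement in the $k$-variable via the time-change $k = \widetilde{\kappa}_t$. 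Under this change of variables I expect $-\int_{[t,\infty)} \widetilde{Y}^*_u\,d\widetilde{\kappa}_u$ to become $\int_0^{\widetilde{\kappa}_t}(y\,\partial\widehat{\Lambda}_k \vee \eta_0)\,dk$ (on the region where the $\vee$ is inactive, the integrand is $y\,\partial\widehat{\Lambda}_k$), and by the fundamental theorem of calculus for the absolutely continuous concave function this integral is $y\widehat{\Lambda}_{\widetilde{\kappa}_t}$ up to the endpoint term at $k=0$. The defining properties of the concave envelope then deliver the inequality: $\widehat{\Lambda} \geq \widetilde{\Lambda}$ everywhere gives ``$\geq$'', while the envelope touches $\widetilde{\Lambda}$ precisely on the support of $d(\partial\widehat{\Lambda})$, i.e.\ exactly where the slope changes, which is where $d\widetilde{Y}^*_t > 0$, giving the complementary-slackness equality. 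The delicate points I anticipate are handling the truncation by $\eta_0$ (the region where $y\,\partial\widehat{\Lambda}_{\widetilde{\kappa}_t} < \eta_0$, on which $\widetilde{Y}^*$ is flat and one only needs the inequality, not equality) and correctly accounting for any atoms of $d\widetilde{\kappa}$ and the behavior as $t \uparrow \infty$ (where $\widetilde{\kappa}_t \downarrow 0$ and $\widehat{\Lambda}_0 = 0$). I would organize the verification so that the concave-envelope identity $\widehat{\Lambda}_k = \widetilde{\Lambda}_k$ on $\{d(\partial\widehat\Lambda) \neq 0\}$ (tangency of the envelope on its breakpoints) does the decisive work, and treat the $\eta_0$-truncation and tail as separate, routine refinements.
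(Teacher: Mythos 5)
Your overall architecture coincides with the paper's proof (continuity of $\widetilde{\Lambda}$, standard envelope regularity, then time-change/fundamental-theorem/tangency for the first-order condition~\eqref{eq:9}), and your treatment of the first-order condition itself is essentially the paper's argument. However, there is a genuine gap at the point where the paper has to do its hardest work: the right-continuity of $\widetilde{Y}^*_t = (y\,\partial\widehat{\Lambda}_{\widetilde{\kappa}_t})\vee\eta_0$. Your argument rests on the claim that $\widetilde{\kappa}$ is right-continuous, but this is false: $\widetilde{\lambda}_t = \sup_{u\geq t}\lambda_u$ is \emph{left}-continuous (Lemma~\ref{lem:1}), hence so is $\widetilde{\kappa}=\widetilde{\lambda}/\rho$; e.g.\ in the Obizhaeva--Wang example $\widetilde{\kappa}_t=\delta_0e^{-2r_0t}1_{[0,T]}(t)$ jumps down immediately \emph{after} $T$. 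When $\widetilde{\kappa}$ jumps at $t_0$ from $k_1=\widetilde{\kappa}_{t_0}$ down to $k_0=\widetilde{\kappa}_{t_0+}<k_1$, right-continuity of $\widetilde{Y}^*$ requires $\partial\widehat{\Lambda}_{k_0}=\partial\widehat{\Lambda}_{k_1}$. Linearity of $\widetilde{\Lambda}$ (hence of $\widehat{\Lambda}$) on the jumped-over interval, together with left-continuity of the density, only yields $\partial\widehat{\Lambda}_{k_1}=\partial\widehat{\Lambda}_{k_0+}$; one must still exclude a kink of the envelope at $k_0$ itself, i.e.\ $\partial\widehat{\Lambda}_{k_0}>\partial\widehat{\Lambda}_{k_0+}$. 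The paper rules this out by contradiction: a kink forces $\widehat{\Lambda}_{k_0}=\widetilde{\Lambda}_{k_0}$ and $\partial\widehat{\Lambda}_{k_0+}\geq\rho_{\tau_{k_0}}$, and since $\widetilde{\Lambda}_k=k\rho_{\tau_k}\geq k\rho_{\tau_{k_0}}$ for $k\leq k_0$ (monotonicity of $\rho$), the chain $k\rho_{\tau_{k_0}}\leq\widetilde{\Lambda}_k\leq\widehat{\Lambda}_k\leq\widehat{\Lambda}_{k_0}+\partial\widehat{\Lambda}_{k_0+}(k-k_0)\leq k\rho_{\tau_{k_0}}$ collapses to equalities, giving $\partial\widehat{\Lambda}_{k_0}=\rho_{\tau_{k_0}}\leq\partial\widehat{\Lambda}_{k_0+}$, a contradiction. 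Nothing in general concave-function theory provides this; it uses the specific structure $\widetilde{\Lambda}_k=k\rho_{\tau_k}$ and Assumption~\ref{as:1}.

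A second, related error: your mechanism for the continuity of $\widetilde{\Lambda}$ is backwards. You claim a jump of $\tau$ (a flat stretch of $\widetilde{\kappa}$) is harmless because $k\rho_{\tau_k}$ "varies continuously at fixed $\tau_k$". In fact, if $\widetilde{\kappa}$ were constant $=k^*>0$ on $[t_1,t_2]$, then $\tau_{k^*}=t_1$ while $\tau_k\geq t_2$ for $k<k^*$, so $\widetilde{\Lambda}_{k^*-}\geq k^*\rho_{t_2}>k^*\rho_{t_1}=\widetilde{\Lambda}_{k^*}$ and $\widetilde{\Lambda}$ would be \emph{discontinuous} there. Continuity holds because such flat stretches cannot occur: $\rho$ is strictly increasing (Assumption~\ref{as:1}) and $\widetilde{\lambda}$ is decreasing, so $\widetilde{\kappa}$ is strictly decreasing on $\{\widetilde{\kappa}>0\}$ and $k\mapsto\tau_k$ is continuous --- which is the paper's one-line argument. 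Finally, note that the change of variables you invoke in the verification of~\eqref{eq:9} is only valid across atoms of $d\widetilde{\kappa}$ because $\partial\widehat{\Lambda}$ is constant on the intervals that $\widetilde{\kappa}$ jumps across --- exactly the linearity fact above --- so this point deserves proof rather than deferral as a "routine refinement".
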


Combining the previous results, we shall obtain the following solution
to our original problem~\eqref{eq:3} which also provides a
characterization different from that outlined in Theorem~\ref{thm:0};
see also Figure~\ref{fig:01} below:

\begin{Corollary}
  \label{cor:1}
  Under the assumptions of Theorem~\ref{thm:2} and using its notation
  we have the following dichotomy:

  In case $|\partial \widehat{\Lambda}|_{\mathbf{L}^2} \set
  (\int_{0}^{\widetilde{\kappa}_0} (\partial \widehat{\Lambda}_k)^2
  \,dk)^{\frac{1}{2}}<\infty$ we can choose $y^*>0$ uniquely such that
 \begin{equation}
   \label{eq:10}
   X^*_t \set \lambda_0 (y^*\partial
     \widehat{\Lambda}_{\widetilde{\kappa}_0}-\eta_0)^+ 
      + \int_{(0,t]} \lambda_s \,d\left\{ (y^* \partial \widehat{\Lambda}_{\widetilde{\kappa}_s}) \vee
     \eta_0\right\}, \quad t \geq 0,
 \end{equation}
 increases from $ X^*_{0-} \set 0$ to $X^*_\infty = x$; this $X^* \in
 \mathcal{X}$ is an optimal order schedule for
 problem~\eqref{eq:3}. In the special case where $\eta_0=0$, $y^* =
 x/|\partial \widehat{\Lambda}|_{\mathbf{L}^2}^2$ and the minimal
 costs are given by $C(X^*) = x^2/(2|\partial
 \widehat{\Lambda}|_{\mathbf{L}^2}^2)$.

 If, by contrast, $|\partial \widehat{\Lambda}|_{\mathbf{L}^2}=\infty$
 then we have $\inf_{X \in \mathcal{X}} C(X) = 0$ and
 problem~\eqref{eq:3} does not have a solution.
\end{Corollary}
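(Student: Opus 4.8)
The plan is to assemble Corollary~\ref{cor:1} from the preceding theorems by composing the three change-of-variable/convexification steps and then tracking the single scaling constant $y$. First I would invoke Theorem~\ref{thm:2} to obtain, for each $y>0$, the process $\widetilde{Y}^{*,y}_t \set (y\partial\widehat{\Lambda}_{\widetilde{\kappa}_t}) \vee \eta_0$ satisfying the first-order conditions~\eqref{eq:9}; by Proposition~\ref{pro:2} any such $\widetilde{Y}^{*,y}$ lying in $\widetilde{\mathcal{Y}}$ solves the convex problem~\eqref{eq:8}. The key measurability point is that $\{d\widetilde{Y}^{*,y}>0\}\subset\{\widetilde{\lambda}=\lambda\}$, so that Theorem~\ref{thm:1} upgrades such a solution of~\eqref{eq:8} to a solution of the original problem~\eqref{eq:7}; mapping back through the bijection of Proposition~\ref{pro:0} via $X_t=\int_{[0,t]}\lambda_s\,dY_s$ and using $C(X)=K(Y)$ then yields optimality for~\eqref{eq:3}. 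Substituting $\widetilde{Y}^{*,y}$ into this map produces exactly formula~\eqref{eq:10}, with the initial-jump term $\lambda_0(y^*\partial\widehat\Lambda_{\widetilde\kappa_0}-\eta_0)^+$ accounting for the jump at $t=0$.

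Next I would turn the existence question into a one-dimensional analysis of the total mass $X^{*,y}_\infty$ as a function of $y$. The point of the time-change by $\widetilde\kappa$ and the concave-envelope density $\partial\widehat\Lambda$ is that the constraint $\int_{[0,\infty)}\lambda_t\,dY_t=x$ should reduce, after substitution, to an expression linear in $y$ when $\eta_0=0$: one expects $X^{*,y}_\infty = y\int_0^{\widetilde\kappa_0}(\partial\widehat\Lambda_k)^2\,dk = y\,|\partial\widehat\Lambda|_{\mathbf{L}^2}^2$. I would verify this by pushing the integral $\int\lambda\,dY$ through the time-change $t\mapsto\widetilde\kappa_t$ (equivalently $k\mapsto\tau_k$), using $\widetilde\lambda=\lambda$ on the support of $d\widetilde Y^*$ together with $\widetilde\kappa=\widetilde\lambda/\rho$ and $\widetilde\Lambda_k=k\rho_{\tau_k}$, so that the factor $\lambda$ and the differential $d\widetilde Y^*\sim y\,d(\partial\widehat\Lambda_{\widetilde\kappa})$ combine into $(\partial\widehat\Lambda_k)^2\,dk$. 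Given this identity, when $|\partial\widehat\Lambda|_{\mathbf{L}^2}<\infty$ the map $y\mapsto X^{*,y}_\infty$ is continuous, strictly increasing and surjective onto $(0,\infty)$ (for $\eta_0=0$; for $\eta_0>0$ the truncation at $\eta_0$ only shifts and still leaves a continuous increasing map), so a unique $y^*$ with $X^{*,y^*}_\infty=x$ exists, giving $y^*=x/|\partial\widehat\Lambda|_{\mathbf{L}^2}^2$ in the $\eta_0=0$ case. The cost formula then follows by evaluating $K(\widetilde Y^{*,y^*})=\tfrac12\int\widetilde\kappa\,d((\widetilde Y^{*,y^*})^2)$ under the same time-change, which should collapse to $\tfrac{(y^*)^2}{2}|\partial\widehat\Lambda|_{\mathbf{L}^2}^2 = x^2/(2|\partial\widehat\Lambda|_{\mathbf{L}^2}^2)$.

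For the divergent case $|\partial\widehat\Lambda|_{\mathbf{L}^2}=\infty$, I would argue that $X^{*,y}_\infty=\infty$ for every $y>0$, so no admissible scaling exists; to conclude $\inf_{\mathcal X}C=0$ I would exhibit a minimizing sequence by taking $y=y_n\downarrow 0$ and truncating the envelope to a level set where the $\mathbf{L}^2$-norm is finite, spreading the fixed mass $x$ over a region where $\partial\widehat\Lambda$ is large (equivalently $\widetilde\kappa$, and hence $\kappa$, is small), which drives $K$ to zero. Concretely, I would restrict to $k\in[\varepsilon,\widetilde\kappa_0]$, solve the finite problem there to get cost $\asymp x^2/(2\int_\varepsilon^{\widetilde\kappa_0}(\partial\widehat\Lambda_k)^2\,dk)$, and let $\varepsilon\downarrow 0$ so the denominator diverges.

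The main obstacle I anticipate is making the time-change identity $\int_{[0,\infty)}\lambda_t\,dY^{*,y}_t = y\,|\partial\widehat\Lambda|_{\mathbf L^2}^2$ rigorous, since it requires careful handling of the interplay between the (possibly discontinuous) passage times $\tau_k$, the atoms of $d\widetilde Y^*$, and the convention $0/0\set0$; in particular one must check that the substitution correctly accounts for jumps of $\widetilde\kappa$ (flat stretches of $\partial\widehat\Lambda$) and that the support condition $\{d\widetilde Y^*>0\}\subset\{\widetilde\lambda=\lambda\}$ genuinely holds at every point of increase, so that no mass is placed where $\widetilde\lambda>\lambda$. The monotonicity and continuity of $y\mapsto X^{*,y}_\infty$ in the presence of the $\eta_0$-truncation is a secondary technical point but should follow from dominated/monotone convergence once the mass identity is established.
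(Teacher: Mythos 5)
Your treatment of the case $|\partial \widehat{\Lambda}|_{\mathbf{L}^2}<\infty$ is essentially the paper's own proof: Theorem~\ref{thm:2} supplies the first-order-condition solution, Proposition~\ref{pro:2} gives optimality in~\eqref{eq:8}, the support inclusion $\{d\widetilde{Y}^*>0\}\subset\{\lambda=\widetilde{\lambda}\}$ lets Theorem~\ref{thm:1} carry optimality over to~\eqref{eq:7}, and Proposition~\ref{pro:0} maps back to~\eqref{eq:3}; your normalization of $y^*$ rests on the same time-change identity $\int_{[0,\infty)}\lambda_t\,d\widehat{Y}_t=\int_0^{\widetilde{\kappa}_0}(\partial\widehat{\Lambda}_k)^2\,dk$ that the paper establishes. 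One caveat: the support inclusion you call a ``key measurability point'' is not automatic. It is the paper's claim~\eqref{eq:177}, proved by a separate argument using concavity and monotonicity of $\widehat{\Lambda}$; you flag it as something to check but never check it, so in a full write-up you would still owe that lemma.

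The genuine gap is in the divergent case $|\partial\widehat{\Lambda}|_{\mathbf{L}^2}=\infty$, where the corollary requires $\inf_{\mathcal{X}}C=0$ for \emph{every} $\eta_0\geq 0$, not just for $\eta_0=0$. Any schedule satisfies $C(X)=\int_{[0,\infty)}\frac{\eta_0}{\rho_t}\,dX_t+C^0(X)$, where $C^0$ is the cost with vanishing initial impact, and your claimed cost $\asymp x^2/\bigl(2\int_\varepsilon^{\widetilde{\kappa}_0}(\partial\widehat{\Lambda}_k)^2\,dk\bigr)$ for the truncated problem is only the $C^0$-part; the $\eta_0$-term appears nowhere in your argument, and your truncation (cutting off only small $k$, i.e.\ late times) keeps all the early trading times where $\rho_t$ is close to $1$, so the crude bound $\int\frac{\eta_0}{\rho_t}\,dX_t\leq\eta_0 x$ does not vanish. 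The paper resolves exactly this by truncating on \emph{both} sides, $\delta^{S,T}=\delta 1_{[S,T]}$ with $S\uparrow\infty$, so that $\int\frac{\eta_0}{\rho_t}\,dX^{S,T}_t\leq \eta_0 x/\rho_S\to 0$, which in turn needs the observation that $\rho$ is unbounded, via $\partial\widehat{\Lambda}_{0+}=\sup_{k>0}\rho_{\tau_k}$. (Your one-sided truncation could in fact be rescued: since $y_\varepsilon=x/\int_\varepsilon^{\widetilde{\kappa}_0}(\partial\widehat{\Lambda}_k)^2\,dk\to0$, integration by parts gives $\int\frac{\eta_0}{\rho_t}\,dX^\varepsilon_t=\eta_0\int\kappa_t\,dY^\varepsilon_t\leq 2\eta_0 y_\varepsilon\widehat{\Lambda}_{\widetilde{\kappa}_0}\to 0$; but this estimate is precisely the missing step.) A second, smaller omission hidden in your ``$\asymp$'': the optimal cost of the truncated problem involves the density of that problem's \emph{own} concave envelope, which need not agree with $\partial\widehat{\Lambda}$ on $[\varepsilon,\widetilde{\kappa}_0]$; the paper makes this work by choosing the truncation points where $\widehat{\Lambda}$ touches $\widetilde{\Lambda}$, and some such choice (or a comparison argument) is needed in your construction as well.
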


\section{Illustrations}
\label{sec:illustrations}

Corollary~\ref{cor:1} reduces the construction of optimal order
schedules to the computation of a concave envelope. This can often be
done in closed form, see, e.g., our treatment in
Section~\ref{sec:ObizhaevaWang} of the constant parameter case from
\citet{ObizhaevaWang}. Alternatively, one can resort to highly
efficient numerical methods from discrete geometry to come up with
solutions to essentially arbitrary liquidity patterns as we illustrate in
Section~\ref{sec:Numerics}.

\subsection{Constant market depth and resilience}\label{sec:ObizhaevaWang}

Let us first show how to recover the solution of \citet{ObizhaevaWang}
who consider a time horizon $T>0$ and constant market depth $\delta_t
\equiv \delta_0 1_{[0,T]}(t)$ and constant market resilience $r_t
\equiv r_0>0$, $t\geq 0$. In this case we have
\begin{align*}
\lambda_t &= \widetilde{\lambda}_t =
\delta_0 e^{-r_0 t} 1_{[0,T]}(t) \mtext{and}
\kappa_t = \widetilde{\kappa}_t = \delta_0 e^{-2r_0 t}1_{[0,T]}(t)\,.
\end{align*}
Hence,
\begin{align*}
\rho_{\tau_k} = \sqrt{\delta_0/(k \vee \kappa_T)} \mtext{and} 
 \widetilde{\Lambda}_k = \sqrt{\delta_0 k} \wedge
 (\sqrt{\delta_0/\kappa_T}k), 
\quad 0 \leq k \leq \delta_0.
\end{align*}
Thus, $\widetilde{\Lambda}$ is its own concave envelope, i.e.,
$\widetilde{ \Lambda} = \widehat{\Lambda}$, and its left-continuous
density is
$$
\partial \widehat{\Lambda}_k = \begin{cases} \frac{1}{2}
  \sqrt{{\delta_0}/{k}}, \quad & k>\kappa_T,\\
  \sqrt{\delta_0/\kappa_T}=e^{r_0 T}, \quad & k \leq
  \kappa_T.\end{cases}
$$
Obviously $\partial \widehat{\Lambda}$ is square integrable (and hence
the problem is well-posed) if and only if $T<\infty$. In that case, we
compute
$$
\widehat{Y}_t \set\partial \widehat{\Lambda}_{\widetilde{\kappa}_t} = \begin{cases} \frac{1}{2} \sqrt{\delta_0/\kappa_t} =
  \frac{1}{2} e^{r_0t}, \quad &t<T,\\ 
\sqrt{\delta_0/\kappa_T}=e^{r_0T},
  \quad &t \geq T,\end{cases}
$$
and for any $y>0$ the order schedule from~\eqref{eq:10},
$$
X^y_t \set \delta_0
\left(\frac{1}{2}y - \eta_0\right)^+ + \frac{1}{2}y \delta_0 r_0 \left(t \wedge T-\tau^y\right)^+ +
\frac{1}{2}\delta_0 y 1_{[T,\infty]}(t), \quad t \geq 0,
$$
with
$$
\tau^y \set \begin{cases} \left(\frac{1}{r_0} \log \frac{2 \eta_0}{y}\right)^+ \wedge
T, & y > 2 \eta_0 e^{-r_0 T},\\
T, &\eta_0 e^{-r_0 T} \leq y \leq 2 \eta_0 e^{-r_0 T},\\
\infty & y<\eta_0 e^{-r_0 T},
\end{cases}
$$
is optimal for the total volume it trades. In particular, if
$\eta_0=0$, we find that
$$
X^y_t =  \frac{y \delta_0}{2} \left(1+ r_0 (t  \wedge
T) + 1_{[T,\infty]}(t)\right), \quad t \geq 0.
$$
So choosing $y^* \set x/(\delta_0(1+r_0 T/2))$ yields $X^* = X^{y^*}$
with $X^*_\infty = x$. We therefore recover the result of
\citet{ObizhaevaWang}: If $\eta_0=0$, i.e., if there have been no
previous orders, it is optimal to place orders of size $y^*
\delta_0/2$ at both $t=0$ and $t=T$, and to place orders at the
constant rate $y^* \delta_0 r_0/2$ in between; cf. Figure~\ref{fig:1}.

\begin{figure}[htbp]
		\centering
			\includegraphics[width=.8\linewidth]{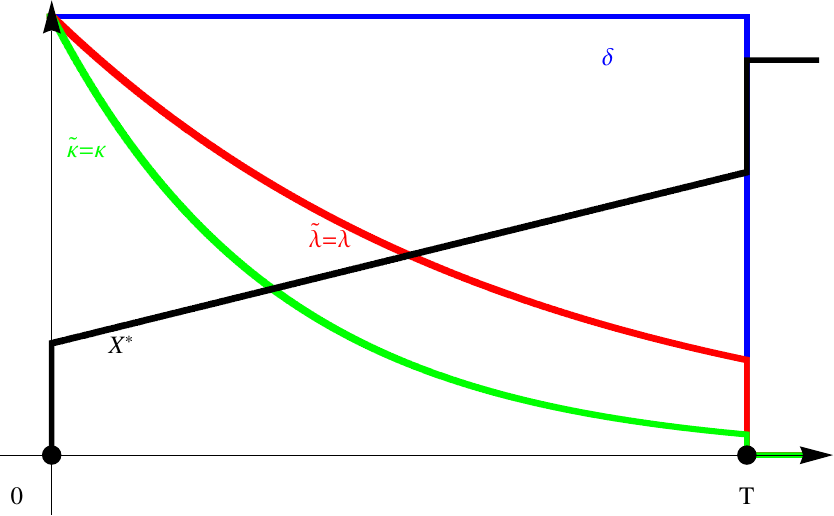}
 \caption{Optimal order schedule $X^*$ (black) for constant market
   depth $\delta$
   (blue), its resilience adjustment $\lambda=\widetilde{\lambda}$
   (red), $\kappa=\widetilde{\kappa}$ (green) over a finite horizon~$T$.}\label{fig:1}
\end{figure}

\subsection{Time-varying market depth}\label{sec:Numerics}

We next illustrate that the above order placement strategy of
\cite{ObizhaevaWang} is indeed strongly dependent on constant market
depth and resilience. Figure~\ref{fig:01} below exhibits how a
fluctuating market depth affects the timing of the optimal order
placement as provided by Corollary~\ref{cor:1}. Note that we include a
shut-down period for the market over the time period $(t_0,t_1)$ when
market depth vanishes. The corresponding concepts introduced by
Theorem~\ref{thm:2} are illustrated in Figure~\ref{fig:0} below.

\begin{figure}[htbp]
		\centering
			\includegraphics[width=.8\linewidth]{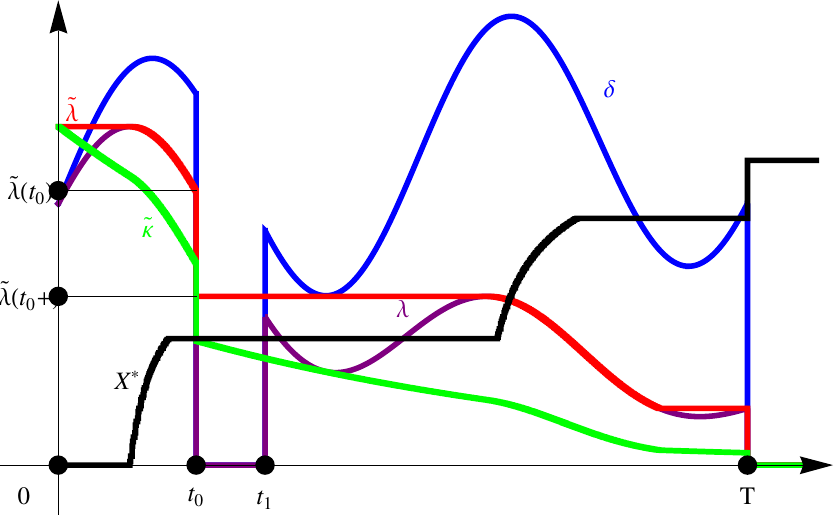}
 \caption{A specification of market depth $\delta$ (blue) with finite
   horizon $T$, its resilience adjustment $\lambda$ (purple), the
   corresponding decreasing envelope $\widetilde{\lambda}$ (red) and
   $\widetilde{\kappa}$ (green) along with an optimal order schedule
   $X^*$ (black).}\label{fig:01}
\end{figure}

\begin{figure}[htbp]
		\centering
			\includegraphics[width=.8\linewidth]{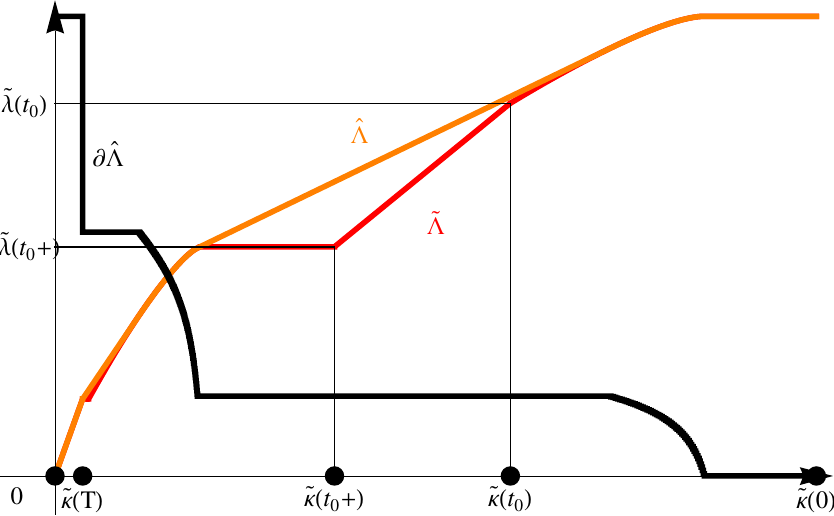}
 \caption{The decreasing envelope of resilience adjusted market depth   $\widetilde{\Lambda}$ (red),
its concave envelope $\widehat{\Lambda}$ (orange) and the density
$\partial \widehat{\Lambda}$ (black).}\label{fig:0}
\end{figure}

If we decrease the resilience parameter to $r_0=0$, i.e., we assume
permament price impact of the broker's orders, the focus on peaks of
market depth sharpens to the extent that eventually only one huge
order is placed when market depth reaches its global maximum; see
Figure~\ref{fig:3}.

\begin{figure}[htbp]
		\centering
			\includegraphics[width=.8\linewidth]{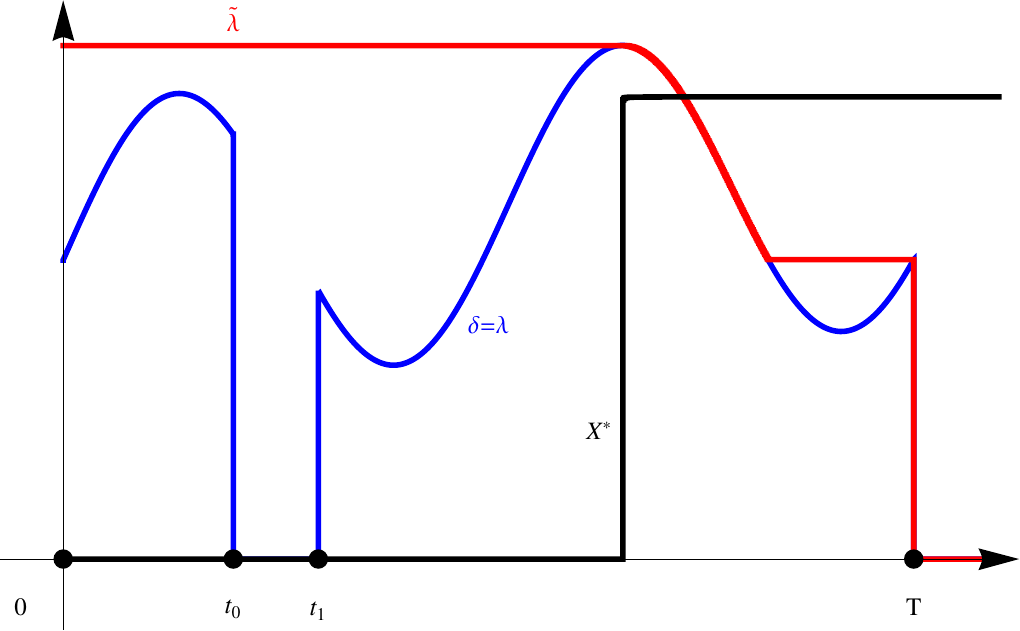}
	 \caption{Optimal order schedule $X^*$ (black) without market
           resilience and time-varying market depth $\delta$ (blue).}\label{fig:3}
\end{figure}

\begin{Proposition}\label{pro:3}
  If $r \equiv 0$ and $\delta$ satisfies Assumption~\ref{as:2}, the
  solutions to optimization problem~\eqref{eq:3} are precisely those
  order schedules $X^* \in \mathcal{X}$ with $\{dX^*>0\} \subset
  \argmax \delta$.
\end{Proposition}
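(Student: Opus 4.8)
My plan is to exploit that $r\equiv 0$ gives $\rho\equiv 1$, so that $\eta^X_t=\eta_0+\int_{[0,t]}dX_s/\delta_s$ and price impact is purely permanent; intuitively there is then nothing to be gained by trading off depth against time, and the whole order should be routed through the instants of maximal depth. First I would record that, under Assumption~\ref{as:2} with $\rho\equiv 1$, the depth $\delta$ is bounded, upper-semicontinuous and vanishes at infinity, so that $\delta^*\set\sup_{t\geq 0}\delta_t>0$ is attained and $\argmax\delta=\descr{t\geq 0}{\delta_t=\delta^*}$ is nonempty and compact, in particular closed.

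The core is a sharp lower bound. Using $\delta_s\leq\delta^*$ twice — once in $\eta^X_{t-}=\eta_0+\int_{[0,t)}dX_s/\delta_s\geq\eta_0+X_{t-}/\delta^*$ and once in $\Delta_t X/(2\delta_t)\geq\Delta_t X/(2\delta^*)$ — I would bound the integrand of $C(X)$ from below and telescope the resulting right-hand side, via $X_{t-}+\tfrac12\Delta_t X=\tfrac12(X_{t-}+X_t)$ and $\int_{[0,\infty)}(X_{t-}+X_t)\,dX_t=\int_{[0,\infty)}d(X^2_t)=x^2$, to obtain
\[
 C(X)\geq \eta_0\,x+\frac{x^2}{2\delta^*}\qquad\text{for every }X\in\mathcal{X}.
\]
Since a single block trade $X=x\,1_{[t_0,\infty)}$ with $t_0\in\argmax\delta$ is admissible and attains this value, the bound is exactly the minimal cost, and optimality of $X$ is equivalent to equality above. (If $C(X)=\infty$ the bound is vacuous, so the equality analysis may assume finiteness.)

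To analyze equality I would rewrite the gap, after applying Tonelli to the iterated $dX$-integral coming from $\int_{[0,t)}dX_s/\delta_s$, in the form
\[
 C(X)-\eta_0\,x-\frac{x^2}{2\delta^*}=\int_{[0,\infty)}f_s\,(x-X_s)\,dX_s+\frac12\sum_{t}f_t\,(\Delta_t X)^2,\qquad f_t\set\frac{1}{\delta_t}-\frac{1}{\delta^*}\geq 0,
\]
where both summands are nonnegative. Equality therefore forces $f_t=0$ at every atom of $X$ and $f_s(x-X_s)=0$ for $dX$-a.e.\ $s$.

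The hard part will be turning these almost-everywhere conditions into the clean pointwise inclusion $\{dX>0\}\subseteq\argmax\delta$. For this I would first dispose of the escape set $\{X_s=x\}$: it equals $[s^*,\infty)$ with $s^*\set\inf\descr{s}{X_s=x}$, on which $X$ is constant, so $dX$ charges it only through a possible atom at $s^*$, which the atom condition already forces into $\argmax\delta$; consequently $\delta_s=\delta^*$ for $dX$-a.e.\ $s$. Since $\argmax\delta$ is closed, $dX$ is then supported on $\argmax\delta$, so any right-increase point $t$ (which satisfies $dX([t,t+\varepsilon))>0$ for all $\varepsilon>0$) is a right-limit of points of $\argmax\delta$ and hence lies in it, giving $\{dX>0\}\subseteq\argmax\delta$. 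Conversely, if $X^*\in\mathcal{X}$ has $\{dX^*>0\}\subseteq\argmax\delta$ then $\delta=\delta^*$ holds $dX^*$-a.e., every inequality above is an equality, and $C(X^*)=\eta_0\,x+x^2/(2\delta^*)$ is minimal, so $X^*$ is optimal.
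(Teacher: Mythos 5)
Your proposal is correct, and its first half coincides with the paper's entire proof: the paper, too, bounds $1/\delta_s$ from below by $1/\max\delta$ inside $\eta^X$, telescopes $d(X^2)$ to get $C(X)\geq \eta_0 x+x^2/(2\max\delta)$, and then simply remarks that equality holds for every $X^*\in\mathcal{X}$ with $\{dX^*>0\}\subset\argmax\delta$. That two-line argument establishes the bound and the sufficiency direction, but leaves the necessity direction --- that \emph{only} such schedules achieve the bound, which is what ``precisely'' requires --- implicit. This is where your proof genuinely adds content. Your gap identity
\begin{equation*}
  C(X)-\eta_0 x-\frac{x^2}{2\delta^*}
  =\int_{[0,\infty)} f_s\,(x-X_s)\,dX_s
  +\frac{1}{2}\sum_t f_t\,(\Delta_t X)^2,
  \qquad f\set \frac{1}{\delta}-\frac{1}{\delta^*}\geq 0,
\end{equation*}
is correct (it follows from $\int_{[0,\infty)} X_{t-}\,dX_t=\int_{[0,\infty)}(x-X_s)\,dX_s$ by Fubini and $x^2=2\int_{[0,\infty)} X_{t-}\,dX_t+\sum_t(\Delta_t X)^2$), and it turns the equality case into two vanishing nonnegative terms; your treatment of the escape set $\{X=x\}=[s^*,\infty)$ via the atom condition, and the upgrade from the resulting $dX$-a.e.\ statement to the pointwise inclusion $\{dX>0\}\subset\argmax\delta$ using closedness of $\argmax\delta$, are both sound --- and you rightly justify attainment and closedness of the maximizer set from upper semicontinuity together with $\limsup_{t\uparrow\infty}\delta_t=0$, a point the paper's notation $\max\delta$ takes for granted. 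The one step you assert without proof, in the converse direction, is that $dX^*$ is concentrated on $\{dX^*>0\}$ (so that $\{dX^*>0\}\subset\argmax\delta$ yields $\delta=\delta^*$ $dX^*$-a.e.); this is a standard fact, since the complement of $\{dX^*>0\}$ is covered by countably many constancy intervals that $dX^*$ does not charge, and the paper's own equality claim relies on it at exactly the same spot, so this is not a gap relative to the paper's level of rigor. In short: you reproduce the paper's lower bound and then supply the rigorous equality analysis that the paper omits.
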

\begin{proof}
  When $r \equiv 0$, $\rho \equiv 1$ and so $\eta^X_t =
  \eta_0+\int_{[0,t]} \frac{dX_s}{\delta_s} \geq
  \eta_0+\frac{X_t}{\max \delta}$, $t \geq 0$. Thus,
  $$
  C(X) \geq \eta_0 x+\frac{x^2}{2 \max \delta}, \quad X \in
  \mathcal{X},
  $$
  with equality for all $X^* \in \mathcal{X}$ with $\{dX^*>0\} \subset
  \argmax \delta$.
\end{proof}

Conversely, with high resilience, orders tend to be spread out more
around local maxima of market depth as illustrated by
Figure~\ref{fig:4}. Figures~\ref{fig:01} and~\ref{fig:4} also show
that the precise moments when it is optimal to issue orders would be
hard to guess in advance. Hence, an approach via classical calculus of
variations as in \citet{FruthUrusovSchoeneborn} or via the methods of
\citet{AcevedoAlfonsi} seems infeasible in these general cases.

\begin{figure}[htbp]
		\centering
			\includegraphics[width=.8\linewidth]{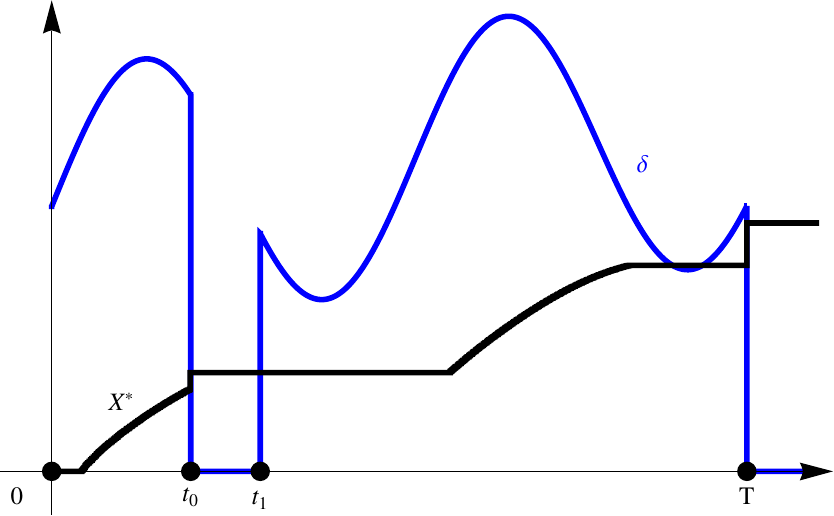}
		 \caption{Optimal order schedule $X^*$ (black) with
                   strong market resilience for time-varying market
                   depth $\delta$
   (blue).}\label{fig:4}
\end{figure}

\section{Proofs}
\label{sec:proofs}

We first prove that the original problem~\eqref{eq:3} can indeed be
reformulated as~\eqref{eq:7} by giving the 

\begin{proofof}{Proposition~\ref{pro:0}}
  We first observe that for $X \in \mathcal{X}$ the mapping
  in~\eqref{eq:6} defines an increasing right-continuous $Y$ with $Y =
  \rho \eta^X$. Because $C(X)<\infty$, $\eta^X$ is $dX$-integrable and
  thus finite on $\{X<x\}$. Hence, $Y$ is finite on this set as well
  and we conclude $dX = \lambda \,dY$. It follows by elementary
  calculus that $K(Y)=C(X)$ and, thus, $Y \in \mathcal{Y}$ as desired.

  Conversely, for $Y \in \mathcal{Y}$, $\kappa=\lambda/\rho$ is
  $d(Y^2)$-integrable. Since $\rho>0$ is continuous this implies that
  $\lambda$ is locally $dY$-integrable and so $X$ given
  by~\eqref{eq:6} is right-continuous and increasing with $dX =
  \lambda \,dY$. By the same reasoning as above this implies
  $C(X)=K(Y)$ as well as $X \in \mathcal{X}$.
\end{proofof}

We next characterize when problem~\eqref{eq:7} is convex:

\begin{proofof}{Proposition~\ref{pro:1}}
  If $\kappa$ is upper semi-continuous and decreasing, it is also
  left-continuous and we can use Fubini's theorem to write
 \begin{displaymath}
  K(Y) = \frac{1}{2}\left(\kappa_\infty (Y^2_\infty-\eta_0^2) - \int_{[0,\infty)} (Y^2_t-\eta_0^2)
    \,d\kappa_t\right)  
 \end{displaymath}
 for any right-continuous increasing $Y$ with $Y_{0-} =\eta_0$. Hence,
 $K=K(Y)$ is obviously convex in such $Y$ with strict convexity
 holding true on its domain for strictly decreasing $\kappa$.

 Conversely, consider for $0 \leq s < t <\infty$ the function $Y \set
 \eta_0 + a 1_{[s,\infty]}+b 1_{[t,\infty]}$. Then
\begin{align*}
  K(Y) &= \frac{1}{2}\left(\kappa_s ((a+\eta_0)^2-\eta_0^2)+\kappa_t
    \left((a+b+\eta_0)^2-(a+\eta_0^2)\right)\right)\\ 
&= \frac{1}{2} \kappa_s
  a^2 +\kappa_t ab+ \frac{1}{2} \kappa_t b^2 + \eta_0(a \kappa_s + b \kappa_t)
 \end{align*}
 is convex in $a,b>0$ if and only if $\kappa_s \geq \kappa_t \geq 0$,
 with strict inequalities corresponding to strict convexity.
\end{proofof}

In order to prepare the proof of Theorem~\ref{thm:1} let us recall
that for any increasing $Z:[0,\infty) \to \mathbf{R}$ we let
\begin{displaymath}
  \{dZ>0\} \set \descr{t \geq 0}{Z_{t-} < Z_u \ttext{for all} u > t}
\end{displaymath}
denote the collection of all points of increase towards the right. For
a decreasing $Z$ we let $\{dZ<0\}\set\{d(-Z)>0\}$. In either case we let
$\supp dZ$ denote the support of the measure $dZ$, i.e., the smallest
closed set whose complement has vanishing $dZ$-measure.

\begin{Lemma}
  \label{lem:1}
  For upper-semicontinuous, bounded $\lambda:[0,\infty) \to \mathbf{R}$, we have that
  $\widetilde{\lambda}_t \set \sup_{u \geq t} \lambda_u$ is
 left-continuous and decreasing with
  \begin{equation}
    \label{eq:11}
    \{d\widetilde{\lambda}<0\} \subset \{\widetilde{\lambda}=\lambda\}\,.
  \end{equation}
  Moreover, we have the partition
  \begin{equation}
    \label{eq:12}
   \mathbf{R} = \{d\widetilde{\lambda}<0\} \cup \bigcup_{n \in
     N_1} [l_n,r_n) \cup \bigcup_{n \in N_2} (l_n,r_n)
  \end{equation}
  where $(l_n,r_n)$, $n \in N$, are the disjoint open
  intervals forming $\mathbf{R}\backslash \supp d\widetilde{\lambda}$
  and where $N_1 = \descr{n \in N}{l_n\geq0, \Delta_{l_n}
    \widetilde{\lambda}=0}$ and $N_2 = N \backslash N_1$.
\end{Lemma}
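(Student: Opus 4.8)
The plan is to handle the three assertions in turn, each building on the previous. First I would record monotonicity, which is immediate: for $s<t$ the supremum defining $\widetilde{\lambda}_t$ is over a smaller index set than that defining $\widetilde{\lambda}_s$, so $\widetilde{\lambda}_t\le\widetilde{\lambda}_s$, and boundedness of $\lambda$ keeps all suprema finite. For left-continuity I would split $\widetilde{\lambda}_s=\max\{\sup_{s\le u<t}\lambda_u,\,\widetilde{\lambda}_t\}$ and let $s\uparrow t$ to get $\widetilde{\lambda}_{t-}=\max\{\limsup_{u\uparrow t}\lambda_u,\,\widetilde{\lambda}_t\}$. Upper-semicontinuity of $\lambda$ gives $\limsup_{u\uparrow t}\lambda_u\le\lambda_t\le\widetilde{\lambda}_t$, so the maximum equals $\widetilde{\lambda}_t$ and left-continuity follows; this is the only place upper-semicontinuity enters the first assertion.

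For the inclusion \eqref{eq:11} I would use left-continuity to rewrite $\{d\widetilde{\lambda}<0\}=\{t:\widetilde{\lambda}_t>\widetilde{\lambda}_u\text{ for all }u>t\}$. Fix such a $t$ and suppose for contradiction that $\lambda_t<\widetilde{\lambda}_t$. Then the supremum $\widetilde{\lambda}_t=\sup_{u\ge t}\lambda_u$ is approached by points $v_k>t$ with $\lambda_{v_k}\to\widetilde{\lambda}_t$. I would rule out both $v_k\to\infty$ and $v_k\to v_\infty>t$: in either case some $\widetilde{\lambda}_u$ with $u>t$ would be forced to be $\ge\widetilde{\lambda}_t$, contradicting the defining property of $t$. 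Hence $v_k\downarrow t$, and upper-semicontinuity yields $\widetilde{\lambda}_t=\lim_k\lambda_{v_k}\le\limsup_{u\downarrow t}\lambda_u\le\lambda_t$, contradicting $\lambda_t<\widetilde{\lambda}_t$. So $\lambda_t=\widetilde{\lambda}_t$, which is \eqref{eq:11}.

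The partition \eqref{eq:12} is the substantive part. After extending $\widetilde{\lambda}$ constantly to $(-\infty,0)$ at its maximal value $\widetilde{\lambda}_0$ (so that it is decreasing and left-continuous on all of $\mathbf{R}$ and $d\widetilde{\lambda}$ is supported in $[0,\infty)$), I would establish the key equivalence
\[
t\notin\{d\widetilde{\lambda}<0\}\quad\Longleftrightarrow\quad\widetilde{\lambda}\text{ is constant on }[t,u]\text{ for some }u>t,
\]
using left-continuity together with monotonicity (a failure of strict decrease to the right means $\widetilde{\lambda}_u=\widetilde{\lambda}_t$ for some $u>t$, forcing constancy on $[t,u]$). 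The right-hand set $F$ is then exactly the union of the complementary intervals of $\supp d\widetilde{\lambda}$ together with those of their left endpoints at which $\widetilde{\lambda}$ does not jump: every interior point of a component $(l_n,r_n)$ lies in $F$; a left endpoint $l_n$ lies in $F$ iff $\widetilde{\lambda}_{l_n}$ equals the constant value on $(l_n,r_n)$, i.e. iff $\Delta_{l_n}\widetilde{\lambda}=0$; and the unique leftmost component is unbounded to the left, so its left endpoint is $<0$ and no real point is at stake there. Sorting the components accordingly into $N_1$ (finite $l_n\ge0$, no jump: contribute $[l_n,r_n)$) and $N_2$ (a jump at $l_n$, or $l_n<0$: contribute $(l_n,r_n)$) gives $F=\bigcup_{n\in N_1}[l_n,r_n)\cup\bigcup_{n\in N_2}(l_n,r_n)$, and since $\mathbf{R}=\{d\widetilde{\lambda}<0\}\cup F$ with the two sets disjoint, \eqref{eq:12} follows.

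I expect the main obstacle to be the endpoint bookkeeping in the last step: one must check at each left endpoint $l_n$ that the jump dichotomy ($\Delta_{l_n}\widetilde{\lambda}=0$ versus $<0$) matches exactly whether $l_n$ belongs to $\{d\widetilde{\lambda}<0\}$ or to the flat piece, and one must reconcile the domain $[0,\infty)$ of $\widetilde{\lambda}$ with a partition of all of $\mathbf{R}$ through the constant extension—precisely what the constraint $l_n\ge0$ in the definition of $N_1$ encodes. Verifying disjointness (that no point is claimed both by $\{d\widetilde{\lambda}<0\}$ and by some interval, and that the intervals are pairwise disjoint as components of an open set) is then routine.
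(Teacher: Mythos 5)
Your proposal is correct and follows essentially the same route as the paper's proof: both rest on the equivalence that $t\notin\{d\widetilde{\lambda}<0\}$ iff $\widetilde{\lambda}$ is constant on some interval $[t,u]$ with $u>t$, place such flat stretches inside components $(l_n,r_n)$ of $\mathbf{R}\setminus\supp d\widetilde{\lambda}$, and sort the left endpoints by the jump dichotomy $\Delta_{l_n}\widetilde{\lambda}=0$ versus $\Delta_{l_n}\widetilde{\lambda}<0$. The only difference is one of detail: you spell out left-continuity and the inclusion~\eqref{eq:11} (via the sequence argument and upper semicontinuity), which the paper declares immediate.
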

\begin{proof}
 Left-continuity of $\widetilde{\lambda}$ and relation~\eqref{eq:11}
  are immediate. Note next that
  $\{d\widetilde{\lambda}<0\} \subset \supp d\widetilde{\lambda}$
  and therefore $\mathbf{R} \backslash \{d\widetilde{\lambda}<0\}  \supset
  \bigcup_{n \in N} (l_n,r_n)$. Hence, to deduce
  partition~\eqref{eq:12} it suffices to observe that for $n \in N_1$
  we have $l_n \not\in \{d\widetilde{\lambda}<0\}$ and that for $t
  \geq 0$ such that $\widetilde{\lambda}_t = \widetilde{\lambda}_u$ for
  some $u > t$ we have $(t,u) \subset (l_n,r_n)$ for some $n \in
  N$, and thus $t \in (l_n,r_n)$ or $t=l_n$ with $\Delta_{l_n} \widetilde{\lambda}=0$.
\end{proof}

The main tool in the proof of Theorem~\ref{thm:1} is the following

\begin{Lemma}
 \label{lem:2}
  Under the conditions of Theorem~\ref{thm:1}, we can find for any
  increasing, right-continuous $Y \geq \eta_0$ an increasing,
  right-continuous $\widetilde{Y} \geq \eta_0$
  such that $\widetilde{Y} \leq Y$ and
  \begin{itemize}
  \item[(i)] $\int_{[0,\infty)} \lambda_t \,dY_t = \int_{[0,\infty)} \lambda_t
    \,d\widetilde{Y}_t$,
\item[(ii)] $\{d\widetilde{Y}>0\} \subset \{d\widetilde{\lambda}<0\}$,
\item[(iii)] $K(Y) \geq K(\widetilde{Y})=\widetilde{K}(\widetilde{Y})$.
  \end{itemize}
\end{Lemma}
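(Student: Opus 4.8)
The plan is to construct $\widetilde{Y}$ by deferring every trade of $Y$ to the \emph{last} future time at which the resilience-adjusted depth $\lambda$ attains its running future supremum $\widetilde{\lambda}$, reweighting masses so as to keep $\int\lambda\,dY$ fixed. For $s$ with $\widetilde{\lambda}_s>0$ I set
\[
  T(s) \set \max\braces{u\geq s:\ \lambda_u=\widetilde{\lambda}_s},
\]
the maximum being attained because $\lambda$ is bounded, upper-semicontinuous and satisfies $\limsup_{t\uparrow\infty}\lambda_t=0$, so each positive level is reached on a compact set of times. Then $\lambda_{T(s)}=\widetilde{\lambda}_{T(s)}=\widetilde{\lambda}_s$, the map $T$ is increasing (hence measurable), and, being the last maximizer, $T(s)\in\{d\widetilde{\lambda}<0\}$. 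I would then let $\widetilde{\mu}\set T_*\bigl((\lambda/\widetilde{\lambda})\,dY\bigr)$ be the push-forward of the reweighted increment measure and put $\widetilde{Y}_t\set\eta_0+\widetilde{\mu}([0,t])$, discarding the (cost- and constraint-neutral) mass that $Y$ may place on $\{\widetilde{\lambda}=0\}$, where $\lambda\equiv 0$.

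Properties (i), (ii) and $\widetilde{Y}\leq Y$ are then read off the construction. Since $\lambda_{T(s)}=\widetilde{\lambda}_s$, reweighting by $\lambda_s/\widetilde{\lambda}_s$ exactly restores the $\lambda$-mass, giving $\int\lambda\,d\widetilde{Y}=\int\lambda\,dY$, which is (i). Because $T(s)\geq s$ and $\lambda_s/\widetilde{\lambda}_s\leq 1$, we get $\widetilde{\mu}([0,t])\leq Y_t-\eta_0$ for every $t$, i.e. $\widetilde{Y}\leq Y$. The measure $\widetilde{\mu}$ is carried by the range of $T$, which lies in $\{d\widetilde{\lambda}<0\}$, and combining this with the last-maximizer property and the partition~\eqref{eq:12} of Lemma~\ref{lem:1} gives (ii); this measure-theoretic bookkeeping, excluding spurious points of increase produced by accumulation of targets, is the most delicate technical point of the proof. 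Finally, by~\eqref{eq:11} we have $\{d\widetilde{\lambda}<0\}\subset\{\widetilde{\lambda}=\lambda\}$, so $\widetilde{\kappa}=\kappa$ on the support of $d(\widetilde{Y}^2)$ and hence $K(\widetilde{Y})=\widetilde{K}(\widetilde{Y})$, the equality in (iii).

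The substance is the inequality $K(Y)\geq K(\widetilde{Y})$, and since $K$ is a \emph{non-convex} quadratic functional of the increment measure $dY$, a direct global comparison is unwieldy; the crux of the whole argument is to linearize it. I pass to the symmetric bilinear representation
\[
  K(Y)=\eta_0\int_{[0,\infty)}\kappa_t\,dY_t+\tfrac12\int_{[0,\infty)}\int_{[0,\infty)}\kappa_{s\vee t}\,dY_s\,dY_t,
\]
obtained from $\tfrac12\,d(Y^2_t)=\tfrac12(Y_t+Y_{t-})\,dY_t$ by Fubini. Pushing forward through $T$ and the reweighting, the desired inequality reduces to two pointwise estimates, valid $dY$- resp. $dY\otimes dY$-almost everywhere on $\{\widetilde{\lambda}>0\}$:
\[
  \kappa_{T(s)}\,\frac{\lambda_s}{\widetilde{\lambda}_s}\leq\kappa_s,\qquad \kappa_{T(s)\vee T(s')}\,\frac{\lambda_s\lambda_{s'}}{\widetilde{\lambda}_s\widetilde{\lambda}_{s'}}\leq\kappa_{s\vee s'}.
\]

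Pleasantly, once the problem is localized this way the estimates are elementary and need no information about \emph{where} the targets land beyond the trivial $T\geq\mathrm{id}$. For the first, $\kappa_{T(s)}\lambda_s/\widetilde{\lambda}_s=\lambda_s/\rho_{T(s)}\leq\lambda_s/\rho_s=\kappa_s$ since $\rho$ is increasing. For the second, take $s\leq s'$ (the kernel is symmetric) and set $w\set T(s)\vee T(s')\geq s'$; the ratio of the two sides then factors as
\[
  \frac{\lambda_w}{\widetilde{\lambda}_{s'}}\cdot\frac{\lambda_s}{\widetilde{\lambda}_s}\cdot\frac{\rho_{s'}}{\rho_w},
\]
and each factor is $\leq 1$: $\lambda_s\leq\widetilde{\lambda}_s$ by definition of the envelope, $\lambda_w\leq\sup_{u\geq s'}\lambda_u=\widetilde{\lambda}_{s'}$ because $w\geq s'$, and $\rho_{s'}\leq\rho_w$ because $\rho$ is increasing. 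Integrating the two estimates against $\eta_0\,dY$ and $\tfrac12\,dY\otimes dY$ yields $K(Y)\geq K(\widetilde{Y})=\widetilde{K}(\widetilde{Y})$, which is (iii). Thus the entire weight of the proof rests on choosing the deferral-and-reweighting construction so that the global quadratic comparison collapses to these one-line pointwise bounds.
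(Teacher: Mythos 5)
Your proposal is correct, and it is worth separating its two halves. The construction itself is in fact identical to the paper's, only phrased as a push-forward instead of via the interval decomposition of Lemma~\ref{lem:1}: on $\{d\widetilde{\lambda}<0\}$ one has $T(s)=s$ and weight $\lambda_s/\widetilde{\lambda}_s=1$ by~\eqref{eq:11}, while on a flat interval $I_n$ one has $\widetilde{\lambda}_s=\lambda_{r_n}$ and $T(s)=r_n$, so your measure $T_*\bigl((\lambda/\widetilde{\lambda})\,dY\bigr)$ is precisely the paper's prescription of leaving mass on $\{d\widetilde{\lambda}<0\}$ untouched and sweeping the mass on each $I_n$ to its right endpoint $r_n$ with weight $\lambda_s/\lambda_{r_n}$. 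Where you genuinely depart from the paper is in proving $K(Y)\geq K(\widetilde{Y})$: the paper splits the difference along the partition~\eqref{eq:12} and, on each $I_n\cup\{r_n\}$, expands the square of the aggregated jump and estimates diagonal, cross and linear terms separately (using $\lambda_t/\lambda_{r_n}\leq 1$, the bound $\widetilde{Y}_{r_n-}\leq y_n$, and $\lambda_t/\lambda_{r_n}\leq\kappa_t/\kappa_{r_n}$), whereas you linearize globally through the symmetric representation $K(Y)=\eta_0\int_{[0,\infty)}\kappa_t\,dY_t+\tfrac12\iint\kappa_{s\vee t}\,dY_s\,dY_t$ and reduce the whole comparison to two pointwise kernel bounds, which rest on exactly the same two ingredients ($\lambda\leq\widetilde{\lambda}$ and monotonicity of $\rho$). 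Your route buys a cleaner argument: no interval-by-interval bookkeeping, no comparison constants $y_n$, and it makes transparent that the improvement is a pointwise domination of kernels under the deferral map; the paper's local argument, in exchange, needs no push-forward machinery and stays at the level of Stieltjes integrals over explicit intervals. Two details you should still write out: in the ratio argument for the second kernel bound the case $\lambda_{s'}=0$ must be handled separately (both sides then vanish), and your sketched bookkeeping for (ii) closes easily since a point of increase of $\widetilde{Y}$ is either in the range of $T$ or a right-accumulation point of that range, and $\{d\widetilde{\lambda}<0\}$ is stable under right-accumulation (given $t_n\downarrow t$ with $t_n\in\{d\widetilde{\lambda}<0\}$ and $u>t$, pick $t_n\in(t,u)$ to get $\widetilde{\lambda}_t\geq\widetilde{\lambda}_{t_n}>\widetilde{\lambda}_u$); alternatively your appeal to~\eqref{eq:12} works because the range of $T$ misses every $[l_n,r_n)$ and $(l_n,r_n)$.
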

\begin{proof}
 We let $I_n$, $n \in N$, denote the disjoint intervals of
 Lemma~\ref{lem:1} forming the complement of
 $\{d\widetilde{\lambda}<0\}$ and we will use $l_n$, $r_n$ to denote their
 respective boundaries. For the one interval $I_n$ whose left bound is $l_n=-\infty$
 we now redefine, for simplicity of notation, 
 $l_n \set 0$ provided that $r_n>0$; if, by contrast, this $I_n$ is just the
 negative half line we can and shall remove it from consideration in
 the sequel. Similarly, if $r_n = \infty$ for some $n \in N$, it
 follows from
 Assumption~\ref{as:2} that $\delta_t = \lambda_t = \kappa_t \equiv 0$ on $I_n$ which thus can be
 disregarded as well.

 Observe then that
 \begin{equation}
   \label{eq:13}
   \sup_{I_n} \lambda
 = \lambda_{r_n}\,,
 \end{equation}
by upper semi-continuity of $\lambda$ and our choice when to include
$l_n$ in $I_n$ and when not.

Let, for $t \geq 0$, 
 \begin{align*}
   \widetilde{Y}_t \set \eta_0&+\int_{[0,t]} 1_{\{d\widetilde{\lambda}<0\}}(s)
   \,dY_s  +
\sum_{n \in N, r_n  \leq t} \int_{I_n}
\frac{\lambda_s}{\lambda_{r_n}} \,dY_s\,.
 \end{align*}
We first note that $\widetilde{Y} \leq Y$. Indeed
\begin{align*}
  Y_t - \widetilde{Y}_t = &\int_{[0,t]}
  1_{\mathbf{R}\backslash\{d\widetilde{\lambda}<0\}}(s)
  \,(dY_s-d\widetilde{Y}_s)\\
=&\sum_{n \in N, l_n \leq t} \left(\int_{I_n \cap [0,t]} \,dY_s
  - 1_{[r_n, \infty)}(t) \int_{I_n}\frac{\lambda_s}{\lambda_{r_n}}
  \,dY_s\right) 
\end{align*}
is nonnegative because of~\eqref{eq:13}.

Assertion~(i) is readily checked using the partition given
by~\eqref{eq:12}. For assertion~(ii) it suffices to observe that all
$r_n$, $n \in N$, are contained in $\{d\widetilde{\lambda}<0\}$. 

In order to prove assertion~(iii), we first note that
$K(\widetilde{Y})=\widetilde{K}(\widetilde{Y})$ is an immediate
consequence of~(ii) and~\eqref{eq:11}. To establish $K(Y) -
K(\widetilde{Y}) \geq 0$ we decompose this difference into its
contributions from the different parts in the partition given by~\eqref{eq:12},
each of which will be shown to be nonnegative.

From $\{d\widetilde{\lambda}<0\}\backslash\descr{r_n}{n \in N}$ we
collect
\begin{align*}
  \frac{1}{2} &\int_{[0,\infty)  \cap
    (\{d\widetilde{\lambda}<0\}\backslash\descr{r_n}{n \in N})}
  \kappa_t \, \left[d(Y^2_t)-d(\widetilde{Y}^2_t)\right] \\
&=\int_{[0,\infty)  \cap
    (\{d\widetilde{\lambda}<0\}\backslash\descr{r_n}{n \in N})}
  \kappa_t \, \left[\left(Y_{t-} + \frac{1}{2} \Delta_t
      Y\right)\,dY_t-\left(\widetilde{Y}_{t-} + \frac{1}{2} \Delta_t
      \widetilde{Y}\right)\,d \widetilde{Y}_t \right]
\end{align*}
which is nonnegative because $Y \geq \widetilde{Y}$ and because $dY_t
= d\widetilde{Y}_t$ for $t \in
\{d\widetilde{\lambda}<0\}\backslash\descr{r_n}{n \in N}$ by
construction.

From $I_n \cup \{r_n\}$, $n \in N$, we get the contribution
\begin{align*}
 \frac{1}{2} \left\{ \int_{I_n \cup \{r_n\}}\kappa_t d(Y^2_t) -
 \kappa_{r_n}\left[\left(\widetilde{Y}_{r_n-}+
     \int_{I_n \cup \{r_n\}}\frac{\lambda_s}{\lambda_{r_n}}
     \,dY_s\right)^2-\widetilde{Y}_{r_n-}^2\right]\right\}
\end{align*}
for which we note that its $[\dots]$-part can be written as
\begin{align*}
  \frac{1}{2}&\left[\left(\widetilde{Y}_{r_n-}+
     \int_{I_n \cup \{r_n\}}\frac{\lambda_s}{\lambda_{r_n}}
     \,dY_s\right)^2-\widetilde{Y}_{r_n-}^2\right]\\
=&\frac{1}{2} \left( \int_{I_n \cup \{r_n\}}\frac{\lambda_s}{\lambda_{r_n}}
     \,dY_s\right)^2+ \widetilde{Y}_{r_n-}\int_{I_n \cup \{r_n\}}\frac{\lambda_s}{\lambda_{r_n}}
     \,dY_s\\
= &\int_{I_n \cup \{r_n\}} \int_{(I_n \cup \{r_n\}) \cap [l_n,t)} \frac{\lambda_s}{\lambda_{r_n}} \frac{\lambda_t}{\lambda_{r_n}}
     \,dY_s \,dY_t+\frac{1}{2} \sum_{\Delta_t Y\not=0, t \in I_n \cup \{r_n\}}
     \left(\frac{\lambda_t}{\lambda_{r_n}}\right)^2 (\Delta_t Y)^2\\
&+ \widetilde{Y}_{r_n-}\int_{I_n \cup \{r_n\}}\frac{\lambda_s}{\lambda_{r_n}}
     \,dY_s\,.
\end{align*}
Hence, using~\eqref{eq:13} again, we obtain with $y_n \set Y_{l_n-}$ if
$l_n \in I_n$ and $y_n \set Y_{l_n}$ otherwise that
\begin{align*}
 \frac{1}{2}[\dots] 
 \leq & \int_{I_n \cup \{r_n\}} (Y_{t-}-y_n)\frac{\lambda_t}{\lambda_{r_n}}
     \,dY_t+\frac{1}{2} \sum_{\Delta_t Y\not=0, t \in I_n \cup \{r_n\}}
     \frac{\lambda_t}{\lambda_{r_n}} (\Delta_t Y)^2\\
&+ \widetilde{Y}_{r_n-}\int_{I_n \cup \{r_n\}}\frac{\lambda_s}{\lambda_{r_n}}
     \,dY_s\\
\leq & \int_{I_n \cup \{r_n\}} (Y_{t-}-y_n)\frac{\lambda_t}{\lambda_{r_n}}
     \,dY_t+\frac{1}{2} \sum_{\Delta_t Y\not=0, t \in I_n \cup \{r_n\}}
     \frac{\lambda_t}{\lambda_{r_n}} (\Delta_t Y)^2\\
&+ y_n\int_{I_n \cup \{r_n\}}\frac{\lambda_s}{\lambda_{r_n}}
     \,dY_s\\
= &\frac{1}{2}\int_{I_n \cup \{r_n\}} \frac{\lambda_t}{\lambda_{r_n}}
     \,d(Y^2_t)
\end{align*}
where the second estimate holds since
$\widetilde{Y}_{r_n-} = \widetilde{Y}_{l_n}\leq
y_n$ because of~(ii). Since $\rho = \lambda/\kappa$ is
increasing by assumption, we have 
\begin{displaymath}
  \frac{\lambda_t}{\lambda_{r_n}} = \frac{\rho_t}{\rho_{r_n}}
  \frac{\kappa_t}{\kappa_{r_n}} \leq   \frac{\kappa_t}{\kappa_{r_n}} 
\end{displaymath}
and thus
\begin{displaymath}
  \frac{1}{2}\kappa_{r_n}[\dots] \leq  \frac{1}{2}\int_{I_n \cup \{r_n\}} \kappa_t
     \,d(Y^2_t)
\end{displaymath}
as remained to be shown.
\end{proof}

With the preceding policy improvement lemma it is now easy to give the

\begin{proofof}{Theorem~\ref{thm:1}}
  By Lemma~\ref{lem:2} and using its notation, we can find for any $Y
  \in \mathcal{Y}$ a $\widetilde{Y} \in \widetilde{\mathcal{Y}} \cap
  \mathcal{Y}$ such that
 \begin{displaymath}
  \widetilde{K}(Y) \geq K(Y) \geq K(\widetilde{Y}) = \widetilde{K}(\widetilde{Y})\,.
 \end{displaymath}
 As a result, $\inf_{{\mathcal{Y}}}{K} =
 \inf_{\widetilde{\mathcal{Y}}} \widetilde{K}$. Moreover, if
 $\widetilde{Y}^* \in \widetilde{\mathcal{Y}}$ attains the latter
 infimum we can apply Lemma~\ref{lem:2} to $\widetilde{\lambda}$ and
 $\widetilde{K}$ instead of $\lambda$ and $K$ to obtain another
 optimal $\widetilde{Y}^{**} \in \widetilde{\mathcal{Y}}$ which
 satisfies in addition $\{d\widetilde{Y}^{**} >0\}\subset
 \{d\widetilde{\lambda}<0\}$. By Lemma~\ref{lem:1}, the latter set is
 contained in $ \{\lambda =
 \widetilde{\lambda}\}=\{\kappa=\widetilde{\kappa}\}$ and thus this
 $\widetilde{Y}^{**}$ is also contained in $\mathcal{Y}$ and optimal
 for~\eqref{eq:7} as well.
\end{proofof}

Let us next derive the first-order conditions of the convexified
problem~\eqref{eq:8} in the

\begin{proofof}{Proposition~\ref{pro:2}}
Recalling that $\widetilde{\kappa}_\infty=0$, we obtain by Fubini's theorem 
\begin{equation}
  \label{eq:14}
  \widetilde{K}(Y) = -\frac{1}{2} \int_{[0,\infty)} (Y^2_t- \eta_0^2) \,
  d\widetilde{\kappa}_t\,.
\end{equation}

For necessity, we observe that for any $Y \in
\widetilde{\mathcal{Y}}$ and $0 < \epsilon \leq 1$ we have
\begin{align*}
 0 \leq & \widetilde{K}(\epsilon Y+(1-\epsilon) Y^*) - \widetilde{K}(Y^*) \\ = &-\epsilon
 \int_{[0,\infty)} (Y_t-Y^*_t)Y^*_t\, d\widetilde{\kappa}_t -
 \frac{\epsilon^2}{2}  \int_{[0,\infty)} (Y_t-Y^*_t)^2\,d\widetilde{\kappa}_t  
\end{align*}
which, upon division by $\epsilon >0$ and letting $\epsilon \downarrow
0$, yields that $Y^*$ also solves the linear problem
\begin{equation}
  \label{eq:15}
  \text{Minimize } \; -\int_{[0,\infty)} Y^*_t Y_t \,d\widetilde{\kappa}_t 
  \mtext{subject to} Y \in \widetilde{\mathcal{Y}}\,.
\end{equation}
Equivalently, due to Fubini's theorem, $Y^*$ is a solution to the problem:
\begin{equation}
  \label{eq:16}
  \text{Minimize } \; \int_{[0,\infty)}
 \left( -\int_{[t,\infty)} Y^*_{u} \,d\widetilde{\kappa}_u\right) \,dY_t
  \mtext{subject to} Y \in \widetilde{\mathcal{Y}}\,.
\end{equation}
As a consequence, $Y^*$ can solve~\eqref{eq:15} only if $dY^*_t>0$
exclusively at those times $t \geq 0$ when $-\int_{[t,\infty)} Y^*_{u}
\,d\widetilde{\kappa}_u/\widetilde{\lambda}_t$ attains its infimum
over $\{\widetilde{\lambda}>0\}$. Hence, this infimum is actually a
minimum and is thus strictly positive. Denoting it by $y>0$ shows the
necessity of~\eqref{eq:9}.

For sufficiency we use~\eqref{eq:14} again to deduce that for $Y \in \widetilde{\mathcal{Y}}$:
\begin{align*}
  \widetilde{K}(Y)-\widetilde{K}(Y^*) &= -\frac{1}{2}\int_{[0,\infty)}
  ((Y_t)^2-(Y^*_t)^2) \,d\widetilde{\kappa}_t %\\&
\geq -\int_{[0,\infty)}
  Y^*_t(Y_t-Y^*_t) \,d\widetilde{\kappa}_t\,.
\end{align*}
The last term is nonnegative if $Y^*$ solves~\eqref{eq:15}, which due
to the equivalence of~\eqref{eq:15} and~\eqref{eq:16} amounts to our
first-order condition~\eqref{eq:9}.
\end{proofof}

The construction of solutions to the first order conditions given in Theorem~\ref{thm:2} can now be established:

\begin{proofof}{Theorem~\ref{thm:2}}
  $\widetilde{\Lambda}$ is continuous on $[0,\widetilde{\kappa}_0]$
  since so is $k \mapsto\tau_k$ because of the strict monotonicity of
  $\rho$ and, thus, of $\widetilde{\kappa}$ on
  $\{\widetilde{\kappa}>0\}$. $\widetilde{\Lambda}$ is increasing
  because, along with $\widetilde{\kappa}_t$, also
  $\widetilde{\Lambda}_{\widetilde{\kappa}_t}=\widetilde{\kappa}_t
  \rho_t=\widetilde{\lambda}_t$ is decreasing in $t \geq 0$.  Absolute
  continuity of the concave envelope $\widehat{\Lambda}$ follows from
  the continuity of $\widetilde{\Lambda}$.

  The monotonicity of $\widetilde{Y}^*$ is obvious from the
  monotonicity of $\widetilde{\kappa}$ and $\partial
  \widehat{\Lambda}$. For its right-continuity note that $\lim_{t
    \downarrow t_0} \widetilde{Y}^*_t = (y\partial
  \widehat{\Lambda}_{\widetilde{\kappa}_{t_0+}}) \vee \eta_0$ by
  left-continuity of $\partial \widehat{\Lambda}$ and its definition
  at $0$. Hence, our assertion amounts to $\partial
  \widehat{\Lambda}_{k_0}= \partial \widehat{\Lambda}_{k_1}$ where
  $k_0 \set \widetilde{\kappa}_{t_0+}$ and $k_1 \set
  \widetilde{\kappa}_{t_0} \geq k_0$. If $k_0 = k_1$ there is nothing
  to show. In case $k_0 < k_1$, $\tau_k = \tau_{k_1}$ for $k \in
  [k_0,k_1)$ and, thus, $\widetilde{\Lambda}$ is linear with slope
  $\rho_{\tau_{k_0}}$ on this interval.  As a consequence,
  $\widehat{\Lambda}$ is linear there as well and, thus, $\partial
  \widehat{\Lambda}_{k_1}=\partial \widehat{\Lambda}_{k_0+}$ by
  left-continuity of $\partial \widehat{\Lambda}$. Hence, it suffices
  to show that there is no downward jump in $\partial
  \widehat{\Lambda}$ at $k_0$.  If there was such a jump then, by the
  properties of concave envelopes, necessarily
  $\widehat{\Lambda}_{k_0}=\widetilde{\Lambda}_{k_0}$ and $\partial
  \widehat{\Lambda}_{k_0+} \geq \rho_{\tau_{k_0}}$.  Hence, for $k
  \leq k_0$ we would have
  \begin{align*}
k \rho_{\tau_{k_0}} &\leq \widetilde{\Lambda}_k \leq 
\widehat{\Lambda}_k \leq \widehat{\Lambda}_{k_0} + \partial
\widehat{\Lambda}_{k_0+}(k-k_0) \leq k \rho_{\tau_{k_0}}\,,
  \end{align*}
  where the first estimate is due to the monotonicity of $\rho$, the
  second is the envelope property of $\widehat{\Lambda}$, the third
  follows from its concavity and the last is a consequence of the just
  derived properties of $\widehat{\Lambda}$ and $\partial
  \widehat{\Lambda}$ at $k_0$.  We would thus have equality everywhere
  in the above estimates and in particular $\partial
  \widehat{\Lambda}_{k_0} =\rho_{\tau_{k_0}} \leq \partial
  \widehat{\Lambda}_{k_0+}$. This is a contradiction to the presumed
  downward jump of $\partial \widehat{\Lambda}$ at $k_0$.

To verify that $\widetilde{Y}^*$ satisfies the first oder
condition~\eqref{eq:9}, let us first argue that
\begin{align*}
-\int_{[t,\infty)} \widetilde{Y}^*_u
\,d\widetilde{\kappa}_u & \geq - y \int_{[t,\infty)} \partial
\widehat{\Lambda}_{\widetilde{\kappa}_u} \,
d\widetilde{\kappa}_u 
= y \int_0^{\widetilde{\kappa}_t} \partial \widehat{\Lambda}_k \,dk =
y \widehat{\Lambda}_{\widetilde{\kappa}_t}\\
& \geq y \widetilde{\Lambda}_{\widetilde{\kappa}_t} = y
\widetilde{\kappa}_t
\rho_t%{\tau_{\widetilde{\kappa}_t}}
=y \widetilde{\lambda}_t\,.
\end{align*}
Indeed, the first estimate is immediate from the definition of
$\widetilde{Y}^*$. The first identity follows by the change-of-time
formula for Lebesgue-Stieltjes-integrals: just observe that
$\widetilde{\kappa}_\infty =0$ by Assumption~\ref{as:2} and that
$\partial \widehat{\Lambda}$ is constant on those intervals contained
in $[0,\widetilde{\kappa}_0]$ which $\widetilde{\kappa}$ jumps across
because $\widetilde{\Lambda}$ is linear on such intervals. The second
identity follows from the absolute continuity of $\widehat{\Lambda}$
and because $\widehat{\Lambda}_0 = \widetilde{\Lambda}_0=0$, again by
Assumption~\ref{as:2}. The second estimate holds because
$\widehat{\Lambda} \geq \widetilde{\Lambda}$ by definition of concave
envelopes and for the last identity we note that
$\tau_{\widetilde{\kappa}_t}=t$ if $\widetilde{\kappa}_t>0$ and
$\widetilde{\kappa}_t=\widetilde{\lambda}_t=0$ otherwise. Finally, we
observe that $d\widetilde{Y}^*_t>0$ can only happen when $y \partial
\widehat{\Lambda}_{\widetilde{\kappa}}$ has increased above $\eta_0$
which ensures equality in the first of the above estimates. Equality
in the second holds for such $t$ as well because if $\partial
\widehat{\Lambda}_{\widetilde{\kappa}}$ increases at time $t$,
$\partial \widehat{\Lambda}$ must decrease at
${\widetilde{\kappa}}_t$, and so $\widetilde{\Lambda}$ coincides with
its concave envelope $\widehat{\Lambda}$ at this point.
\end{proofof}

We are now in a position to wrap up and give the

\begin{proofof}{Corollary~\ref{cor:1}}
Let $\widehat{Y}_t \set \partial
\widehat{\Lambda}_{\widetilde{\kappa}_t}$, $\widehat{Y}_{0-}\set0$ and
define $Y^y_t \set (y\widehat{Y}_t) \vee \eta_0$, $t \geq 0$, $Y^y_{0-} \set\eta_0$.

As a first step we check that
\begin{equation}
  \label{eq:177}
  d\widehat{Y}_{t_0}>0 \text{ only at times $t_0 \geq 0$ when }
  \widetilde{\lambda}_{t_0} = \lambda_{t_0}\,.
\end{equation}
In fact, we will show that $d\widetilde{\lambda}_{t_0}<0$ for such
$t_0$. If $\Delta_{t_0} \widetilde{\kappa}<0$, this is obvious. So let
us suppose that $\widetilde{\kappa}_{t_0+}=\widetilde{\kappa}_{t_0}$
and assume that there is $t_1>t_0$ such that $\widetilde{\lambda}_{t}
= \widetilde{\lambda}_{t_0}$ for $t \in [t_0,t_1]$. In that case,
$\widetilde{\Lambda}$ is constant on the interval
$(\widetilde{\kappa}_{t_1}, \widetilde{\kappa}_{t_0}]$.  Because
$d\widehat{Y}_{t_0}>0$, the density $\partial \widehat{\Lambda}$ must
decrease at $k_0\set
\widetilde{\kappa}_{t_0+}=\widetilde{\kappa}_{t_0}$ and so the
envelope $\widehat{\Lambda}$ coincides with $\widetilde{\Lambda}$ at
this point. Concavity and monotonicity of $\widehat{\Lambda}$ then
imply, however, that $\partial \widehat{\Lambda} = 0$ around $k_0$, a
contradiction to its decrease there.

Let us next prove that $|\partial
\widehat{\Lambda}_k|_{\mathbf{L}^2}<\infty$ if and only if
$\widetilde{\lambda}$ is $d\widehat{Y}$-integrable. To see this we
argue that with $\partial \widehat{\Lambda}_{\widetilde{\kappa}_{0-}}
\set 0$ we have
\begin{align*}
  \int_{[0,\infty)} {\lambda}_t \,d\widehat{Y}_t &= \int_{[0,\infty)}
  \widetilde{\Lambda}_{\widetilde{\kappa}_t} \,d (\partial
  \widehat{\Lambda}_{\widetilde{\kappa}_t}) =  \int_{[0,\infty)}
  \widehat{\Lambda}_{\widetilde{\kappa}_t} \,d (\partial
  \widehat{\Lambda}_{\widetilde{\kappa}_t})\\
&=\int_0^{\widetilde{\kappa}_0} \partial
\widehat{\Lambda}_l %\left(
\partial\widehat{\Lambda}_{\widetilde{\kappa}_{\tau_l}}
%-\partial\widehat{\Lambda}_{\widetilde{\kappa}_{0}}\right) 
\,dl
=\int_0^{\widetilde{\kappa}_0}(\partial
\widehat{\Lambda}_l)^2 \,dl
%-\widehat{\Lambda}_{\widetilde{\kappa}_{0}} \partial
%\widehat{\Lambda}_{\widetilde{\kappa}_{0}}
\,.
\end{align*}
Indeed, the first identity is just~\eqref{eq:177} and the definition
of $\widehat{Y}$ and $\widetilde{\Lambda}$. The second identity holds
because $\widetilde{\Lambda}=\widehat{\Lambda}$ at points where
$\partial \widehat{\Lambda}$ changes; the third identity follows from
an application of Fubini's theorem after writing
$\widehat{\Lambda}_{\widetilde{\kappa}_t} =
\int_0^{\widetilde{\kappa}_t} \partial \widehat{\Lambda}_l \,dl$ and
the last equality holds since $\partial \widehat{\Lambda}$ is
left-continuous and constant over intervals that $\widetilde{\kappa}$
jumps across.

So if $|\partial \widehat{\Lambda}|_{\mathbf{L}^2}<\infty$, then
$X^y_t \set \int_{[0,t]} \lambda \,dY^y$, $t \geq 0$, is real-valued,
right-continuous and increasing in $t$. Moreover, $X^y_\infty$ is
increasing in $y \geq 0$ with $X^0_\infty = 0$ and $X^y_\infty \geq
X^y_0 \to \infty$ as $y \uparrow \infty$. In fact, $X^y_\infty = y
\int_{[0,\infty)} \lambda \,d\widehat{Y}$ for $y \geq
\eta_0/\widehat{Y}_0$ (where $0/0 \set \infty$) and, for $y \in
[\eta_0/\widehat{Y}_\infty,\eta_0/\widehat{Y}_0]$, $X^y_\infty = y
\int_{[\tau_y,\infty)} \lambda \,d\widehat{Y} =
\int_{[\tau_{y+},\infty)} \lambda \,d\widehat{Y}$ where $\tau_y \set
\inf\descr{t \geq 0}{y>\eta_0/\widehat{Y}_t}$. Hence, $X^y_\infty$ is
in fact continuous and strictly increasing from 0 to $\infty$ in $y
\geq \eta_0/\widehat{Y}_\infty$ and we thus obtain existence and
uniqueness of $y^*>0$ with $X^{y^*}_\infty = x$.  Hence, we can
conclude that $X^* \set X^{y^*}$ is contained in $\mathcal{X}$ (and
that thus the corresponding $Y^* = Y^{y^*}$ of~\eqref{eq:6} is
contained in $\mathcal{Y}$) once we have established that
$K(Y^*)<\infty$. For this it suffices to observe that $K(Y^*) \leq
(y^*)^2 K(\widehat{Y})$ and that by the same arguments as in our
previous calculation of $\int_{[0,\infty)} \lambda \,d\widehat{Y}$ we
have
\begin{align*}
  K(\widehat{Y}) & = \widetilde{K}(\widehat{Y})=\frac{1}{2}
  \int_{[0,\infty)} \widetilde{\kappa}_t \,
  d\left((\partial\widehat{\Lambda}_{\widetilde{\kappa}_t})^2\right)
=\frac{1}{2}\int_0^{\widetilde{\kappa}_0} (\partial
\widehat{\Lambda}_l)^2 \,dl<\infty.
\end{align*}

We next show that $X^*$ and $Y^*$ are optimal, respectively, for
problem~\eqref{eq:3} and problems~\eqref{eq:7} and~\eqref{eq:8}. In
fact, due to Theorem~\ref{thm:2}, $Y^* = (y^* \partial
\widehat{\Lambda}_{\widetilde{\kappa}}) \vee \eta_0$ satisfies the
first order condition~\eqref{eq:9} and, by Proposition~\ref{pro:2}, is
thus optimal for the convexified problem~\eqref{eq:8} provided that
$Y^*$ is also contained in $\widetilde{\mathcal{Y}}$.  To see that
even $\int_{[0,\infty)} \widetilde{\lambda} \,dY^*=x$ and to deduce
the optimality of $Y^*$ also for problem~\eqref{eq:7} (and thus, by
Proposition~\ref{pro:0}, optimality of $X^*$ for the original
problem~\eqref{eq:3}) it suffices by Theorem~\ref{thm:1} to check
$\{dY^*>0\} \subset \{\lambda=\widetilde{\lambda}\}$ which, in fact,
is immediate from~\eqref{eq:177}.

The formula for the minimal costs when $\eta_0=0$ is an immediate
consequence of our above computations for $\widehat{Y}$. It thus
remains to show that our optimization problems do not have a solution
if $|\partial \widehat{\Lambda}|_{\mathbf{L}^2}=\infty$. To see this
note that in this case there is, for any sufficiently large $0 \leq
S<T<\infty$, a schedule $X^{S,T} \in \mathcal{X}$ which is optimal for
$\delta^{S,T} \set \delta 1_{[S,T]}$ instead of $\delta$ when
$\eta_0=0$. This follows from our earlier results once we note that
the corresponding concave envelope $\widehat{\Lambda}^{S,T}$ always
has a bounded density because $T<\infty$, and thus a solution to this
finite time horizon problem exists provided its market depth does not
vanish identically. This latter condition clearly holds for
$\delta^{S,T}$ when $T$ is chosen sufficiently large, for otherwise
$\delta \equiv 0$ after some time $S$ which would rule out the
presumed explosion of $\partial \widehat{\Lambda}$ at $0$. Note that
we can futhermore choose $S, T \uparrow \infty$ such that
$\widehat{\Lambda}_{\widetilde{\kappa}}$ coincides with
$\widetilde{\Lambda}_{\widetilde{\kappa}}$ at these points. This
ensures that $\widehat{\Lambda}=\widehat{\Lambda}^{S,T}$ on
$[\widetilde{\kappa}_T,\widetilde{\kappa}_S]$ and hence $|\partial
\widehat{\Lambda}^{S,T}|_{\mathbf{L}^2}^2 \to
\int_0^{\widetilde{\kappa}_0} (\partial \widehat{\Lambda}_{k \wedge
  \widetilde{\kappa}_S})^2\,dk= \infty$ as $T \uparrow \infty$.

Now because
$$
C(X^{S,T}) = \int_{[0,\infty)} \frac{\eta_0}{\rho_t} \,dX^{S,T}_t +
C^0(X^{S,T}) \leq \frac{\eta_0x}{\rho_S} + C^0(X^{S,T})
$$
where $C^0(X)$ denotes the cost of any $X \in \mathcal{X}$ when $\eta_0=0$, we 
obtain
$$
\inf_{\mathcal{X}} C \leq 
\frac{\eta_0x}{\rho_S} + \frac{x^2}{2|\partial \widehat{\Lambda}^{S,T}|_{\mathbf{L}^2}^2}
$$
where we used our formula for the optimal costs $C^0(X^{S,T})$.
Because of our special choice of $S,T$, the second term vanishes for
any fixed $S$ as $T \uparrow \infty$. The first term vanishes for $S
\uparrow \infty$ because $\rho$ has to be unbounded for $\partial
\widehat{\Lambda}_k$ to increase to $\infty$ as $k \downarrow
0$. Indeed: $\partial \widehat{\Lambda}_{0+} = \sup_{k>0}
\widetilde{\Lambda}_k/k = \sup_{k>0} \rho_{\tau_k}$.
\end{proofof}

Finally let us show how Theorem~\ref{thm:0} follows from Corollary~\ref{cor:1}:

\begin{proofof}{Theorem~\ref{thm:0}}
  In view of Corollary~\ref{cor:1} it suffices to show that $ \sup_{0
    \leq t \leq s} L^*_t= \partial
  \widehat{\Lambda}_{\widetilde{\kappa}_s}$, $s \geq 0$. Now, from the
  properties of concave envelopes and because of the left-continuity
  of $\partial \widehat{\Lambda}$ we have for any $0 < k \leq
  \widetilde{\kappa}_0$ that
$$
\partial \widehat{\Lambda}_k = \sup_{l \in [k,\widetilde{\kappa}_0]} \inf_{m \in
  [0,l)} \frac{\widetilde{\Lambda}_m - \widetilde{\Lambda}_l}{m-l} \,.
$$
With the changes of variables $k= \widetilde{\kappa}_s$, $l =
\widetilde{\kappa}_t$, $m=\widetilde{\kappa}_u$ the preceding ratio
turns into the one occurring in~\eqref{eq:4}, accomplishing our proof.
\end{proofof}

\bibliographystyle{plainnat} \bibliography{../bib/finance}

\end{document}